\newif\ifpets

\petsfalse % This means that it is NOT PETS
%\petstrue % This means that it is PETS edition 

\ifpets
\documentclass[USenglish,oneside,twocolumn]{article}
\else 
\documentclass[USenglish,oneside]{article}
\usepackage[letterpaper, portrait, margin=1in]{geometry}
\fi

\usepackage[utf8]{inputenc}%(only for the pdftex engine)
\ifpets
\usepackage[big]{dgruyter_NEW}

\DOI{foobar}

\cclogo{\includegraphics{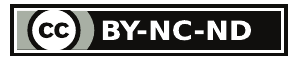}}
\fi

\usepackage{amsmath}
\usepackage{amssymb, amsthm}
\usepackage{graphicx}

 % for comments

\newtheorem{proposition}{Proposition}
\newtheorem{lemma}{Lemma}
\newtheorem{remark}{Remark}
\theoremstyle{definition}
\newtheorem{definition}{Definition}
\newtheorem{example}{Example}

\usepackage{csquotes}

\usepackage{algorithmic}
\usepackage[noend,lined,linesnumbered]{algorithm2e} 

\usepackage{longtable}

\ifpets
\newcommand{\descr}[1]{\vspace{0.2cm} \noindent \textbf{\sffamily #1}}
\else
\newcommand{\descr}[1]{\vspace{0.2cm} \noindent \textbf{#1}}
\fi

\usepackage{enumitem} % for changing enumerate to letters

\usepackage{xcolor}
\definecolor{teal}{HTML}{008080}

\usepackage{hyperref}
\hypersetup{
   colorlinks=true,
   citecolor=teal
}

\usepackage{multirow}

\allowdisplaybreaks[1] % To allow multiline equations over multiple pages. For amssmath package.

\begin{document}

\ifpets
\author*[1]{Hassan Jameel Asghar}

\author[2]{Dali Kaafar}

%   \author[3]{Third Author}

%   \author[4]{Fourth Author}

\affil[1]{Macquarie University, Australia, E-mail: hassan.asghar@mq.edu.au}

\affil[2]{Macquarie University, Australia, E-mail: dali.kaafar@mq.edu.au}

%   \affil[3]{Affil, E-mail: email@email.edu}

%   \affil[4]{Affil, E-mail: email@email.edu}

\title{\huge Averaging Attacks on Bounded Noise-based Disclosure Control Algorithms}

\runningtitle{Averaging Attacks on Bounded Noise-based Disclosure Control Algorithms}
\else

\title{Averaging Attacks on Bounded Noise-based Disclosure Control Algorithms\thanks{This is a preliminary version of the paper with the same title accepted for publication in the Proceedings on the 20th Privacy Enhanching Technologies Symposium (PETS 2020).}}
\author{Hassan Jameel Asghar \& Dali Kaafar \\
\small Macquarie University, Australia\\
\small \texttt{\{hassan.asghar, dali.kaafar\}@mq.edu.au}
}
\date{}
\fi

\ifpets
\begin{abstract}
{We describe and evaluate an attack that reconstructs the histogram of any target attribute of a sensitive dataset which can only be queried through a specific class of real-world privacy-preserving algorithms which we call \emph{bounded perturbation} algorithms. A defining property of such an algorithm is that it perturbs answers to the queries by adding zero-mean noise distributed within a bounded (possibly undisclosed) range. Other key properties of the algorithm include only allowing restricted queries (enforced via an online interface), suppressing answers to queries which are only satisfied by a small group of individuals (e.g., by returning a zero as an answer), and adding the same perturbation to two queries which are satisfied by the same set of individuals (to thwart differencing or averaging attacks). A real-world example of such an algorithm is the one deployed by the Australian Bureau of Statistics' (ABS) online tool called TableBuilder, which allows users to create tables, graphs and maps of Australian census data~\cite{tbe-algo}. We assume an attacker (say, a curious analyst) who is given oracle access to the algorithm via an interface. We describe two attacks on the algorithm. Both attacks are based on carefully constructing (different) queries that evaluate to the same answer. 
%However, the queries need to be constructed carefully, as the algorithm adds the exact same noise when two queries have the same set of contributors (individuals who satisfy a query on a set of attributes). 
The first attack finds the hidden perturbation parameter $r$ (if it is assumed not to be public knowledge). The second attack removes the noise to obtain the original answer of some (counting) query of choice. We also show how to use this attack to find the number of individuals in the dataset with a target attribute value $a$ of any attribute $A$, and then for all attribute values $a_i \in A$. None of the attacks presented here depend on any background information. Our attacks are a practical illustration of the (informal) fundamental law of information recovery which states that ``overly accurate estimates of too many statistics completely destroys privacy''~\cite{dinur-nissim, conc-dp}.} 
\end{abstract}

\keywords{Re-identfication attacks, reconstruction attacks, statistical disclosure control, differential privacy}
%  \classification[PACS]{}
 % \communicated{...}
 % \dedication{...}

\journalname{Proceedings on Privacy Enhancing Technologies}
\DOI{Editor to enter DOI}
\startpage{1}
\received{..}
\revised{..}
\accepted{..}

\journalyear{..}
\journalvolume{..}
\journalissue{..}

\else

\fi

\maketitle
\ifpets
\else

\begin{abstract}
{We describe and evaluate an attack that reconstructs the histogram of any target attribute of a sensitive dataset which can only be queried through a specific class of real-world privacy-preserving algorithms which we call \emph{bounded perturbation} algorithms. A defining property of such an algorithm is that it perturbs answers to the queries by adding zero-mean noise distributed within a bounded (possibly undisclosed) range. Other key properties of the algorithm include only allowing restricted queries (enforced via an online interface), suppressing answers to queries which are only satisfied by a small group of individuals (e.g., by returning a zero as an answer), and adding the same perturbation to two queries which are satisfied by the same set of individuals (to thwart differencing or averaging attacks). A real-world example of such an algorithm is the one deployed by the Australian Bureau of Statistics' (ABS) online tool called TableBuilder, which allows users to create tables, graphs and maps of Australian census data~\cite{tbe-algo}. We assume an attacker (say, a curious analyst) who is given oracle access to the algorithm via an interface. We describe two attacks on the algorithm. Both attacks are based on carefully constructing (different) queries that evaluate to the same answer. 
%However, the queries need to be constructed carefully, as the algorithm adds the exact same noise when two queries have the same set of contributors (individuals who satisfy a query on a set of attributes). 
The first attack finds the hidden perturbation parameter $r$ (if it is assumed not to be public knowledge). The second attack removes the noise to obtain the original answer of some (counting) query of choice. We also show how to use this attack to find the number of individuals in the dataset with a target attribute value $a$ of any attribute $A$, and then for all attribute values $a_i \in A$. None of the attacks presented here depend on any background information. Our attacks are a practical illustration of the (informal) fundamental law of information recovery which states that ``overly accurate estimates of too many statistics completely destroys privacy''~\cite{dinur-nissim, conc-dp}.} 
\end{abstract}

\tableofcontents

\fi
\section{Introduction}
\label{sec:intro}
We consider online privacy-preserving algorithms
%\footnote{We assume the words privacy and confidentiality to carry the same meaning.} 
that return \emph{noisy} answers to queries on sensitive data, where the {zero-mean} noise is strictly bounded between an interval parameterised by a \emph{perturbation parameter}. Our focus is restricted to algorithms that (privately) answer \emph{counting} queries. An example counting query is: ``How many people in the dataset are aged 25 and live in the suburb of Redfern in New South Wales, Australia?'' An example of such privacy-preserving algorithms is the perturbation algorithm employed by the TableBuilder tool from the Australian Bureau of Statistics (ABS), which allows access to the Australian population census data.\footnote{See \url{http://www.abs.gov.au/websitedbs/censushome.nsf/home/tablebuilder}. At the time of this writing, there are three flavours of TableBuilder. The first two are TableBuilder Basic \& Pro, which require registration (the latter being a charged product). After the registration request is approved, the user can login to use the TableBuilder tool. The third flavour is for guests, called TableBuilder Guest. This can be accessed by users without registration and provides access to fewer variables from the census data. The attacks mentioned in this paper are applicable to all flavours.} We shall call this algorithm the TBE algorithm named after its authors~\cite{tbe-algo}. The TBE algorithm and similar \emph{bounded perturbation algorithms} are built on certain principles to address privacy and utility concerns, outlined below. 
\begin{itemize}
 	\item Access to sensitive data is only allowed through a \emph{restricted} query interface. This limits the types of queries that can be executed via the underlying (privacy-preserving) algorithm, therefore minimizing information leakage by ensuring that the (effective) query language is not rich enough. A rich query language would require query auditing to ensure privacy; such auditing may not even be programmable~\cite{dp-book}. 
	\item The noise added to the queries is bounded within a predetermined range, say $\pm 3$ of the actual answer. From a privacy angle this adds uncertainty if the (adversarial) analyst is trying to run a query on certain attributes in the dataset to infer some information about a target individual. From a utility point of view, the bounded noise ensures that the noise never overwhelms the true statistics. 
	\item The algorithm suppresses low non-zero counts (e.g., by returning 0). This makes it hard for an analyst to know if certain characteristics (combination of attributes or fields in the dataset) are shown by its target individual(s) or not. For instance, a 0 count could be an actual 0 or a 1 in the original dataset.
	\item The algorithm adds the exact same noise if the answers returned by two queries are contributed by the same set of \emph{contributors}. A contributor to a query is any individual that satisfies the query. This is a defence against averaging attacks~\cite{fraser-differencing}, where the analyst cannot pose multiple, possibly differently structured, queries with the same set of contributors to reduce noise by averaging to find the true count.
\end{itemize}

\descr{Contributions.} We show an attack that retrieves the entire histogram of a target attribute from a dataset which can only be queried through the TBE algorithm.\footnote{An example of an attribute is `Age', and its histogram is the number of people of each age.} Our attack relies on carefully constructing queries that yield the same (true) answer and averaging them over all queries to eliminate noise. Furthermore, in cases where it may be argued that the perturbation parameter is not public information, we show an attack that retrieves the exact (hidden) perturbation parameter. We remark that the attacks presented do not depend on any background knowledge about individuals in the dataset, i.e., they are dataset independent, and hence applicable to any underlying dataset.\footnote{Barring a few mild assumptions on the domain of the dataset, e.g., the existence of an attribute with more than 2 attribute values (Section~\ref{sub:find-perturb}).} We discuss several mitigation measures, and argue that the most sound strategy is to add noise as a function of the number of queries. This follows from the bound on the success probability of our attack, and is consistent with the amount of noise required via the notion of differential privacy~\cite{calib-noise}. 

\descr{Results.} For both attacks, i.e., finding the hidden perturbation and removing noise, we derive exact expressions for the success probabilities as a function of the perturbation parameter and the number of queries to the algorithm. We also evaluate the noise removing attack on a synthetic dataset queried through an API to the TBE algorithm. Our results (both theoretical and experimental) show that any perturbation parameter less than or equal to 10 can be retrieved with probability $\approx 0.90$ with only up to 1,000 queries. Furthermore, we are able to recover a smaller perturbation parameter ($5$), which is desirable for utility, with only 200 queries with a probability of more than $0.95$. Using the same API, with the perturbation parameter $2$, we retrieve an entire histogram of a target column of the synthetic dataset with more than 107 attribute values through only 400 queries per attribute value (via the noise removing attack). The attack also successfully retrieves suppressed counts (low counts returned as 0), and hence distinguishes between actual zeros and suppressed zeros. 

\descr{Application to the ABS TableBuilder.} Our use of the API to query the TBE algorithm simulates the setting of the ABS TableBuilder tool providing access to the Australian census data. The TableBuilder tool does not currently have a programmable API, and can only be accessed via a web interface. The attack in practice can still be launched by either manually querying TableBuilder to construct tables or more realistically, by crafting web queries through scripts to directly query the JavaScript programs behind the web interface. We chose to use the simulated setting for a quicker illustration of the attack and more importantly due to ethical considerations; the census data being highly sensitive. 

%\blue{
\descr{Privacy Implications of Our Attacks.} Our main attack removes noise in the answers returned by the TBE algorithm. This is specifically problematic for low counts, e.g., counts of 1. For instance, assume that there is a single individual in the dataset gendered male and within age bracket 30-39 who lives in the suburb Newtown. Since the true answer is 1, TBE will return the \emph{suppressed} answer 0. Our attack enables the analyst to retrieve the true count 1 by cleverly constructing queries that return larger counts (cf. Section~\ref{subsub:broaden}). Once the true count is revealed, the analyst having the \emph{background knowledge} (male, 30-39, Newtown) can successfully re-identify the person in the dataset. Thus, true counts enable other privacy attacks such as re-identification and inference (cf. Section~\ref{sub:priv-attacks}). Note that it is to avoid such attacks that the TBE algorithms employs the aforementioned principles to hide true counts. We remark that some international government agencies such as Statistics Sweden have expressed interest in the use of TableBuilder for disclosure control of frequency tables~\cite{keefe, stat-sweden}, although it is acknowledged that further evaluation of the technique is necessary~\cite{stat-sweden}. Moreover, there are plans to expand the use of TableBuilder to other Australian national government agencies and datasets.
%\footnote{For example, see \url{https://www.abs.gov.au/protari}.} 
Thus our results have implications beyond the ABS use of TableBuilder.
%}

\section{Preliminaries}
\label{sec:prelim}
We model the database $D$ as a set of rows of data, each belonging to a unique individual from a finite set of individuals $U$. Thus, the size of the dataset is the same as the size of the set $U$, i.e., $|D| = |U|$. The data from an individual $u \in U$ is represented as a row $x \in D$. We denote the link by $x = \text{data}(u)$. The row $x$ is a member of some domain $\mathbb{D}$. 

%\blue{
\subsection{Definitions: Queries and Contributors}
%}
\begin{definition}[Query]
\label{def:query}
A query $q: \mathbb{D} \rightarrow \{0, 1\}$ is defined as a predicate on the rows $x \in D$. Note that this is in fact the definition of a \emph{counting query}. The queries in this document are restricted to counting queries. The query's result on the dataset $D$ is defined as $q(D) = \sum_{x \in D} q(x)$. For any two queries $q_1$ and $q_2$, we denote by $q_1 \wedge q_2$ the predicate that evaluates to $1$ on a row if and only if both $q_1$ and $q_2$ evaluate to 1 on the row. Likewise we denote by $q_1 \vee q_2$ the predicate that evaluates to $1$ if either $q_1$ or $q_2$, or both evaluate to $1$. \qed
\end{definition}
We will often omit the argument of $q$, i.e., $D$, since we are concerned with a single dataset in this document.

\begin{definition}[Contributors]
A \emph{contributor} of a query $q$ is any individual $u \in U$ such that $q(x) = 1$, where $x = \text{data}(u)$. The \emph{set of contributors} of a query $q$, denoted $C(q)$ is defined as
\[
C(q) = \left\{ u \in U : q(x) = 1, \text{ where } x = \text{data}(u) \right\}
\] 
Two queries $q_1$ and $q_2$ are said to have the same contributors if $C(q_1) = C(q_2)$. Otherwise they are said to have different contributors. \qed
\end{definition}
Note that having different contributors does not mean that $C(q_1)$ and $C(q_2)$ are necessarily disjoint. Furthermore, it is possible for two different queries (different predicates) $q_1$ and $q_2$ to have the same contributors (depending on the dataset). 
We assume the dataset $D$ to be vertically divided into \emph{attributes}. Let $A$ denote one such attribute, and let $|A|$ denote its cardinality, i.e., the number of \emph{attribute values} of $A$. Let $a \in A$ be an attribute value. We assume that the data of each $u \in U$ takes on only one value from $A$. The query $q_a$ is defined as the predicate which evaluates to $1$ if the row has value $a$ under $A$. Let $A' \subseteq A$, then $q_{A'}$ is defined as $q_{A'} = \vee_{a \in A'} (q_{a})$. We shall call this query, the \emph{total query}, as it returns the total number of counts that satisfy each of the attribute values $a \in A'$. We also denote the trivial query $q_\emptyset$, which evaluates to 1 on every row. Hence $q_\emptyset(D) = n$. Also, note that $q_A(D) = n$ for every attribute $A$ of $D$. Clearly, in both cases the set of contributors is the entire user set $U$. 

\begin{table}[!h]
\centering
\caption{An example database $D$ with $|U| = 6$ users.}
\label{table:d-example}
\begin{tabular}{l|c|c|c}
%\multirow{3}{*}{$U$} & \multicolumn{6}{c}{$A = \{a_1, a_2, a_3\}$}\\
%\cline{2-7}
%& \multicolumn{2}{c|}{$a_1 = \text{Fruits}$} &  \multicolumn{2}{c|}{$a_2 = \text{Nuts}$} &  \multicolumn{2}{c}{$a_3 = \text{Water}$} \\
%\cline{2-7}
%&Mon $(t = 1)$ & Tue $(t = 2)$ &Mon $(t = 1)$ & Tue $(t = 2)$ &Mon $(t = 1)$ & Tue $(t = 2)$\\
$U$ & Suburb & Age & Gender\\
\hline\hline
$1$ & Redfern & 20-29 & M \\
$2$ & Redfern & 20-29 & M \\
$3$ & Newtown & 30-39 & F \\
$4$ & Redfern & 20-29 & F \\
$5$ & Surry Hills & 40-49 & M \\
$6$ & Darlinghurst & 70-79 & F\\
\end{tabular}
\end{table}

\begin{example}
\label{ex:queries}
Table~\ref{table:d-example} shows a dataset $D$ with 6 users. We have three different attributes: Suburb, Age and Gender. The attribute $U$ is just shown for illustrative purposes. It is otherwise forbidden to be queried. The attribute Suburb has 4 attribute values, Age has 5 attribute values and Gender has 2 attribute values. Queries $q_{\text{Redfern}}$ and $q_{\text{20-29}}$ both evaluate to $3$. Also note that $C(q_{\text{Redfern}}) = C(q_{\text{20-29}}) = \{1, 2, 4\}$, and thus the two queries have the same set of contributors. On the other hand, $C(q_{\text{Redfern}}) \neq C(q_{\text{M}})$. We have $(q_{\text{Redfern}} \wedge q_{\text{M}})(D) = 2$, and $(q_{\text{Redfern}} \vee q_{\text{M}})(D) = 4$. Let $A' = \{\text{Redfern}, \text{Newtown}\}$. Then the total query $q_{A'}(D)$ has answer $4$. \qed
\end{example}

%\blue{
\subsection{Background on Privacy Attacks}
\label{sub:priv-attacks}
We briefly describe some known categories of privacy attacks. 
For ease, we assume a database $D$ with $n$ users and three attributes $U$, $A$ and $B$. As in Table~\ref{table:d-example}, $U$ represented unique user identities. We can further assume $A$ to be the attribute Suburb, and $B$ to be the attribute Age in the database of this table. Queries $q$ are defined on $D$ as in the previous section. We assume that the database can be queried via a mechanism $\mathcal{M}$ only, which is possibly randomized. To define the various attacks, we shall take example mechanisms $\mathcal{M}$.

We first assume an adversary who has some \emph{background knowledge} of some user $u \in U$. In particular, the adversary knows that the user has some attribute value $a \in A$. Consider a mechanism $\mathcal{M}$ which simply strips the attribute $U$ and releases exact answers to queries on attribute $A$ and/or $B$. Suppose $q_a(D) = 1$, i.e., only one individual takes on $a \in A$ in the dataset $D$. Then this mechanism is susceptible to a \emph{re-identification attack}. The adversary knowing $a \in A$ has re-identified $u$'s data in the database. The adversary can further launch an \emph{inference attack} by asking the queries $q_a \wedge q_b$ for all $b \in B$, thus inferring what value is taken by the user $u$ in the attribute $B$. It is usually argued that for re-identification to be successful the attribute $a$ should also be unique in the population, and not just in the dataset. However, if the attribute $A$ is replaced with a large enough set of attributes as background knowledge, then the resulting attribute-value tuple is likely to be unique in the population as well~\cite{sweeney}. Thus, any mechanism that returns exact answers is susceptible to re-identification attacks. We also note that the success of the inference attack is not necessarily tied to successful re-identification. Indeed, $q_a(D)$ might be greater than $1$, but the adversary can still ask the queries $q_a \wedge q_b$ for all $b \in B$ and learn which values $b$ are \emph{not} taken by its target.

To mitigate re-identification, we can redefine the mechanism to suppress low counts: any query answer less than a suppression parameter $s \ge 0$ is clipped to 0. However, this mechanism is susceptible to a \emph{differencing attack}. The adversary defines the subset $A' = A - \{a\}$, asks the queries $q_A$ and $q_{A'}$, and subtracts the answer to the second query from the first. If both $q_A(D)$ and $q_{A'}(D)$ are greater than $s$, then the difference in the two answers reveals $q_a(D)$. To thwart the differencing attack, we can design a mechanism $\mathcal{M}$ which, instead of suppression, perturbs all answers by adding fresh random noise from the set $\{-1, 0, 1\}$. In this case, the attacker can launch an \emph{averaging attack}. In particular, the attacker asks for the answer of $q_a$ a total of $t$ times, and then averages the answers. With increasing $t$, the probability that the average deviates from the true answer approaches 0. A solution is to add the same noise if the same query is asked again, or if the same contributors satisfy the query, as is done, for instance, in the TBE algorithm. 

Yet another form of attack is \emph{database reconstruction}, where the adversary reconstructs a target column of a dataset, e.g., corresponding to attribute $A$. In other words, the adversary attempts to exactly retrieve the values taken by each user under $A$ in the dataset. The database reconstruction attacks in the literature~\cite{exposed, dinur-nissim, dwork-recon, power-recon, cohen-linear} require some form of queries that can select rows corresponding to different subsets of users $U$. An example of such queries are subset sum queries. Previous work shows that any mechanism that returns noisy answers to subset sum queries where the noise scale is bounded by a constant is susceptible to database reconstruction attacks. Alternatively, the attack can be launched by using a set of attributes (instead of $U$) such that each user in the dataset takes a unique attribute-value tuple in the dataset. See~\cite[\S 3]{exposed} for more details. 

Our main proposed attack is an averaging attack, which in combination with a differencing attack, results in a \emph{histogram reconstruction attack}. Namely, For any given $a \in A$, we construct exact answers to the queries $q_{a} \wedge q_b$, for all $b \in B$. It is not clear how a database reconstruction attack can be launched on the TBE algorithm, as its query interface is very restrictive. 
%}

%Let $q_D$ be the trivial query which evaluates to $1$ if a row is in the database. Clearly $C(q_D) = U$. 

%\begin{definition}
%Let $C(q)$ be the set of contributors of the query $q$. A query $q'$ is said to be a \emph{sub-query} of $q$ if $C(q') \subseteq C(q)$. Two sub-queries $q_1$ and $q_2$ are said to be \emph{disjoint} sub-queries if $C(q_1) \cap C(q_2) = \emptyset$. A set of sub-queries $\{q_i\}$ is said to partition $q$ if the collection of sets $\{C(q_i)\}$ is a partition of $C(q)$. \qed 
%\end{definition}
%
%The dataset $D$ is assumed to be vertically divided into attributes. Let $A$ denote one such attribute, and let $|A|$ denote its cardinality, i.e., the number of attribute values of $A$. Let $a \in A$ be an attribute value. We assume that the data of each $u \in U$ takes on only one value from $A$. The query $q_a$ is defined as the predicate which evaluates to $1$ if a given row has value $a$. Let $A' \subseteq A$, then $q_{A'}$ is defined as
%\[
%q_{A'} = \vee_{a \in A'} (q_{a})
%\]
%Let $q_D$ be the trivial query which evaluates to $1$ if a row is in the database. Clearly $C(q_D) = U$. 
%
%\begin{proposition}
%Let $C(q)$ be a non-empty set of contributors of some query $q$. Let $A$ be an attribute. Then each query $q \wedge q_a$ is a sub-query of $q$ and the set $\{C(q \wedge q_a) : a \in A\}$ partitions $C(q)$. In particular, if $q = q_D$, then the set partitions $U$. \qed
%\end{proposition}

\section{Privacy Algorithms}
\label{sec:algos}
Our focus is on a particular privacy algorithm (mechanism) that returns (perturbed) answers to queries $q$ on the database $D$, where the queries are as defined in Definition~\ref{def:query}. We call the algorithm the \emph{Bounded Noisy Counts} algorithm. The algorithm returns the answer to a query $q$ by adding bounded noise $e$ from the uniform distribution over the set of integers in the interval $[-r, r]$, where $r$ is some positive integer, i.e., the perturbation parameter. We shall denote the set of integers in $[-r, r]$ by $\mathbb{Z}_{\pm r}$ and the (discrete) uniform distribution over $\mathbb{Z}_{\pm r}$ by $\mathbb{U}_{\pm r}$. The algorithm also has two exceptional cases:
\begin{enumerate}
	\item If the answer to $q$ is less than a suppression parameter $s \ge r$, then the answer returned is exactly 0. 
	\item If two queries $q_1$ and $q_2$ have the same contributors, then the noise $e$ added to the two queries is the same. 
\end{enumerate}
The algorithm therefore is a stateful algorithm where the state consists of a dictionary of subsets of contributors and the corresponding noise. We denote this algorithm by $\mathcal{M}_{r, s}$ and on any input query $q$ denote its output as $\mathcal{M}_{r, s}(q)$. The algorithm is described in Algorithm~\ref{algo:bounded-noise}. We remark that the noise distribution can be any \emph{admissible distribution}, which we define in Section~\ref{sec:entropy}. Bounded uniform random noise is one example.

\begin{algorithm}[!ht]
\SetAlgoLined
\SetAlCapSkip{1em}
\DontPrintSemicolon{}
\let\oldnl\nl% Store \nl in \oldnl
\newcommand{\nonl}{\renewcommand{\nl}{\let\nl\oldnl}}% Remove line number for one line
%\nonl\TitleOfAlgo{{Select Features}}
\SetKwInOut{Input}{Input}
\SetKwInOut{State}{State}
\Input{The query $q$, perturbation parameter $r$, suppression parameter $s \ge r$.}
\State{A noise dictionary, denoted $\mathsf{nd}$, with keys from subsets of $U$ and values in $\mathbb{Z}_{\pm r}$.}
Evaluate $C(q)$ and let $n = q(D)$.\;
\eIf {$n \le s$}{
	return 0.\;
	}{
	\eIf {$C(q) \in \mathsf{nd}$}{
		obtain noise $e \leftarrow \mathsf{nd}(C(q))$.\;
		return $n + e$.\;
		}{
		sample $e \sim \mathbb{U}_{\pm r}$.\;
		add entry $\mathsf{nd}(C(q)) = e$.\;
		return $n + e$.\;
		}
	}
\caption{The Bounded Noisy Counts Algorithm $\mathcal{M}_{r, s}$.}
\label{algo:bounded-noise}
\end{algorithm}

From now on, we shall drop the subscripts $r$ and $s$, and denote the algorithm simply as $\mathcal{M}$. Following properties are direct consequences of the algorithm.
\begin{proposition}
\label{prop:bnc}
Let $\alpha \leftarrow \mathcal{M}(q)$ be the answer returned by $\mathcal{M}$ on some query $q$. Let $n = q(D)$. Then 
\begin{enumerate}[label=(\alph*)]
\item $\alpha \ge 0$,
\item $n - s \le \alpha \le n + r$.
\item If $\alpha > 0$ then $C(q) \neq \emptyset$.
\end{enumerate} 
\end{proposition}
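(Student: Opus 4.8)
The plan is to exploit the fact that Algorithm~\ref{algo:bounded-noise} produces its output through exactly one of two mutually exclusive branches, so the whole proposition reduces to a short case analysis. The first step is to record the output dichotomy: on input $q$ with $n = q(D)$, the algorithm either enters the suppression branch (when $n \le s$) and returns $\alpha = 0$, or it enters the perturbation branch (when $n > s$) and returns $\alpha = n + e$ for some $e \in \mathbb{Z}_{\pm r}$. The key point is that in the second branch the bound $-r \le e \le r$ holds regardless of whether $e$ was freshly sampled from $\mathbb{U}_{\pm r}$ or read back from the dictionary $\mathsf{nd}$, since by construction every stored value lies in $\mathbb{Z}_{\pm r}$. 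Once this dichotomy is in hand, each of (a)--(c) becomes immediate.

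For part (a) I would check both branches: the suppression branch is trivial, and for the perturbation branch the hypothesis $s \ge r$ is exactly what is needed, since $n > s \ge r$ forces $n - r > 0$ and hence $\alpha = n + e \ge n - r > 0$. I would flag that this is the only place the assumption $s \ge r$ is used, and that it is precisely the mechanism preventing the added noise from pushing a true count below zero. Part (b) then follows from the same two cases, combining $-r \le e \le r$ with the facts $r \le s$ and $n \ge 0$ (a count is non-negative); in the suppression branch the lower bound $n - s \le \alpha$ is merely a restatement of the branch condition $n \le s$, while the upper bound is clear.

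Part (c) I would establish by contraposition, using the identity $n = q(D) = |C(q)|$ that follows directly from the definitions of $q(D)$ and $C(q)$: if $C(q) = \emptyset$ then $n = 0 \le s$, the suppression branch fires, and $\alpha = 0$, so $\alpha > 0$ forces $C(q) \neq \emptyset$. I do not anticipate any genuine obstacle, as everything reduces to tracking which branch executes; the only points requiring a moment's care are the integrality and non-negativity of counts and the role of $s \ge r$ in securing the strict positivity used in part (a).
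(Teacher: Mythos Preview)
Your proposal is correct and follows essentially the same case analysis as the paper's proof: both split on whether the suppression branch ($n \le s$, output $0$) or the perturbation branch ($n > s$, output $n+e$ with $e \in \mathbb{Z}_{\pm r}$) fires, and use $s \ge r$ exactly where you do. Your write-up is in fact slightly more careful in noting that stored noise values also lie in $\mathbb{Z}_{\pm r}$ and in explicitly verifying the upper bound in part~(b) for the suppression branch, but the argument is otherwise identical.
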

\begin{proof}
See Appendix~\ref{app:proofs}.
\end{proof}

The above algorithm is used as a subroutine by another algorithm which we call the \emph{Attribute Analyser}. This algorithm helps the querier analyse multiple attribute values under an attribute at once. Let $\mathbf{B}$ denote a tuple of (one or more) attributes in $D$. Let $\mathbf{b} \in \mathbf{B}$ denote the vector of attribute values whose $i$th entry is one of the attribute values of the $i$th attribute in $\mathbf{B}$. Let $q_{\mathbf{b}}$ denote the predicate that evaluates to 1 if and only if the row satisfies all values in $\mathbf{b}$. This models a target sub-population in the dataset $D$. The Attribute Analyser takes an attribute value vector $\mathbf{b} \in \mathbf{B}$, and a subset of attribute values $A' \subseteq A$, where $A \notin \mathbf{B}$. Let $|A'| = m$. The algorithm then runs $\mathcal{M}$ on (i) each of the queries $q_\mathbf{b} \wedge q_{a_i}$ where $a_i \in A', i \in \{1, \ldots, m\}$ obtaining answers $\alpha_i$, and (ii) on the \emph{total} query $q_\mathbf{b} \wedge q_{A'}$, obtaining the answer $\alpha_{A'}$. It then returns the answer vector $(\alpha_1, \ldots, \alpha_{m}, \alpha_{A'})$. 

\begin{algorithm}[!ht]
\SetAlgoLined
\SetAlCapSkip{1em}
\DontPrintSemicolon{}
\let\oldnl\nl% Store \nl in \oldnl
\newcommand{\nonl}{\renewcommand{\nl}{\let\nl\oldnl}}% Remove line number for one line
%\nonl\TitleOfAlgo{{Select Features}}
\SetKwInOut{Input}{Input}
\SetKwInOut{State}{State}
\Input{Attribute value vector $\mathbf{b} \in \mathbf{B}$, attribute subset $A' \subseteq A$ of cardinality $m$ (where $A \notin \mathbf{B}$).}
%\State{A noise dictionary, denoted $\mathsf{nd}$, with keys from subsets of $U$ and values in $E$.}
\For{$i = 1$ \KwTo $m$}{
	Let $a_i$ be the $i$th element in $A'$.\;
	Obtain $\alpha_i \leftarrow \mathcal{M}(q_\mathbf{b} \wedge q_{a_i})$.\;
}
%Let $q_{A'} = \vee_{a \in A'} (q_b \wedge q_{a})$.\;
%\eIf {$A' = \emptyset$}{
%	Obtain $\alpha_{A'} \leftarrow \mathcal{M}(q_b)$.\;
%	}{
%	Obtain $\alpha_{A'} \leftarrow \mathcal{M}(q_{A'})$.\;
%
%}
Obtain $\alpha_{A'} \leftarrow \mathcal{M}(q_\mathbf{b} \wedge q_{A'})$.\;
return $(\alpha_1, \ldots, \alpha_m, \alpha_{A'})$.\; 
\caption{The Attribute Analyser Algorithm.}
\label{algo:att-analyze}
\end{algorithm}

Note that $A'$ can be possibly empty, in which case $q_{A'} = q_\emptyset$, and the algorithm returns the answer to $q_\mathbf{b} \wedge q_\emptyset = q_\mathbf{b}$ only. Likewise, $\mathbf{B}$ can be possibly empty, meaning that $q_\mathbf{b} = q_\emptyset$, in which case we are analysing $A'$ over the whole dataset $D$ (rather than over a sub-population).

\begin{example}
\label{ex:att-analyse}
Consider the dataset from Table~\ref{table:d-example}. Let $\mathbf{B} = (\text{Suburb}, \text{Gender})$. Furthermore, let $\mathbf{b} \in \mathbf{B}$ be $(\text{Redfern}, \text{M})$. Thus, we are interested in the sub-population of people who are male and living in the suburb of Redfern in the dataset. Thus $q_\mathbf{b} = q_{\text{Redfern}} \wedge q_{\text{M}}$. Let $A = \text{Age}$, and $A' \subseteq A$ be $\{\text{20-29}, \text{30-39}\}$. Then $\alpha_1$ corresponds to $q_\mathbf{b} \wedge q_{\text{20-29}}$ (true answer 2), $\alpha_2$ corresponds to $q_\mathbf{b} \wedge q_{\text{30-39}}$ (true answer 0), and $\alpha_{A'}$ corresponds to $q_\mathbf{b} \wedge q_{A'} = (q_\mathbf{b} \wedge q_{\text{20-29}}) \vee (q_\mathbf{b} \wedge q_{\text{30-39}})$ (true answer 2). If on the other hand, we have $B = \text{Suburb}$ and $b = \text{Redfern}$, $A = \text{Gender}$ and $A' = \{ \text{M}, \text{F} \}$, then we get $(q_b \wedge q_\text{M})(D) = 2$, $(q_b \wedge q_\text{F})(D) = 1$, and $(q_b \wedge q_{A'})(D) = 3$. Note that $C(q_b \wedge q_\text{M}) \neq C(q_b \wedge q_\text{F}) \neq C(q_b \wedge q_{A'})$. Thus, $\mathcal{M}$ would add fresh noise values to each of these true counts. If the suppression parameter $s$ is set to $1$, then the answer to $q_b \wedge q_\text{F}$ would be fixed to 0.\qed 
\end{example}

\section{Privacy Attacks}
\label{sec:attacks}
We assume an attacker (say, a curious analyst) who is given oracle access to the Attribute Analyser (which in turn uses the Bounded Noisy Counts algorithm as a subroutine). The attacker does not know the parameter $r$. We describe two attacks on the algorithm. The first attack finds the hidden perturbation parameter $r$. The second attack removes the noise to obtain the original count $n = q(D)$ of some query of choice $q$. We also show how to use this attack to obtain the value $q_a(D)$ for some target attribute value $a \in A$, and then for all attribute values $a \in A$. We reiterate that none of the attacks depend on any background information. For simplicity, we assume that $s = r$. Our fundamental unit of measurement will be the number of queries $t$ submitted to the Bounded Noisy Counts algorithm.

\subsection{Attack 1: Finding the Perturbation Parameter \texorpdfstring{$r$}{r}}
\label{sub:find-perturb}
Let $b \in B$ be an attribute value and let $A \neq B$ be an attribute with only two attribute values $a_1$ and $a_2$, e.g., the Gender attribute with values male and female. Let $n = q_b(D)$, $n_1 = (q_b \wedge q_{a_1})(D)$ and $n_2 = (q_b \wedge q_{a_2})(D)$.
Consider the sequence of inputs $(b, \{a_1\})$, $(b, \{a_2\})$ and $(b, \emptyset)$ to the Attribute Analyser. As output, we obtain $n_1 + e_1$, $n_2 + e_2$ and $n + e_3$, where $e_i$ are the noise terms added by Bounded Noisy Counts. Clearly, $n = n_1 + n_2$. Furthermore, 
\begin{lemma}
\label{lem:e:ind}
If $n_1, n_2 > r$, then $e_1$, $e_2$ and $e_3$ are independent samples from the distribution $\mathbb{U}_{\pm r}$.
\end{lemma}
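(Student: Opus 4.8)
The plan is to unfold the two ways in which the Bounded Noisy Counts algorithm $\mathcal{M}$ can fail to return a fresh, independent noise term---suppression of low counts, and noise reuse across queries with equal contributor sets---and to rule out both under the hypothesis $n_1, n_2 > r$.

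First I would confirm that none of the three relevant queries is suppressed. Since we assume $s = r$, a query is suppressed only when its true count is at most $s = r$ (the $n \le s$ branch of Algorithm~\ref{algo:bounded-noise}). The hypothesis gives $n_1, n_2 > r$ directly, and because each individual of $U$ takes exactly one value of the two-valued attribute $A = \{a_1, a_2\}$, every contributor of $q_b$ contributes to exactly one of $q_b \wedge q_{a_1}$ and $q_b \wedge q_{a_2}$; hence $n = n_1 + n_2$ and in particular $n \ge n_1 > r$. Thus all three queries $q_b \wedge q_{a_1}$, $q_b \wedge q_{a_2}$ and $q_b$ (the last coming from the input $(b,\emptyset)$, for which $q_{A'} = q_\emptyset$ and so $q_b \wedge q_\emptyset = q_b$) exceed the suppression threshold, each takes the non-suppression branch of $\mathcal{M}$, and so each receives a noise value from $\mathbb{U}_{\pm r}$, either freshly sampled or read from the dictionary $\mathsf{nd}$.

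The heart of the argument is to show the three queries have pairwise distinct contributor sets, so that the dictionary rule does not force any two of their noise terms to coincide. Writing $S_1 = C(q_b \wedge q_{a_1})$, $S_2 = C(q_b \wedge q_{a_2})$ and $S = C(q_b)$, the same partition observation gives $S = S_1 \sqcup S_2$. Since $n_1, n_2 > r \ge 0$, both $S_1$ and $S_2$ are nonempty; two disjoint nonempty sets are unequal, so $S_1 \neq S_2$, and each is a proper subset of their union, giving $S_1 \subsetneq S$ and $S_2 \subsetneq S$. Hence $S_1, S_2, S$ are pairwise distinct, i.e.\ the three queries correspond to three distinct keys of $\mathsf{nd}$.

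Finally I would close by appealing to how $\mathcal{M}$ populates $\mathsf{nd}$: a value is sampled independently from $\mathbb{U}_{\pm r}$ the first time a given contributor set is encountered and is reused only when that exact set recurs. Because $S_1, S_2, S$ are three distinct keys, $e_1, e_2, e_3$ are three separate dictionary entries, each an independent draw from $\mathbb{U}_{\pm r}$, which is the claim. The only genuine obstacle is the distinctness step; suppression and independence are immediate once the partition $S = S_1 \sqcup S_2$ with nonempty parts is in hand, and that partition is exactly where the restriction to a two-valued attribute $A$ is used.
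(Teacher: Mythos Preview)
Your proposal is correct and follows essentially the same approach as the paper: both argue that $n_1, n_2 > r$ forces the three queries past suppression, observe that $\{C(q_b \wedge q_{a_1}), C(q_b \wedge q_{a_2})\}$ is a partition of $C(q_b)$ into nonempty parts, and deduce pairwise distinctness of the three contributor sets so that $\mathcal{M}$ assigns three independent $\mathbb{U}_{\pm r}$ samples. Your write-up is in fact more explicit than the paper's (you spell out $n = n_1 + n_2$, the role of $s = r$, and the proper-subset reasoning), but the underlying argument is identical.
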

\begin{proof}
See Appendix~\ref{app:proofs}.
\end{proof}
\begin{lemma}
\label{lem:diff:cont}
Let $b_1, \ldots, b_m$ be different attribute values from one or more attributes. 
%Let $\mathcal{M}(q_{b_i}) > 0$ for all $i$. 
If $\mathcal{M}(q_{b_i}) \neq \mathcal{M}(q_{b_j})$ then $C(q_{b_i}) \neq C(q_{b_j})$, for all $i, j \in [m]$, $i \ne j$.
\end{lemma}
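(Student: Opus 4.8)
The plan is to establish the contrapositive: I will show that whenever two of these queries share the same contributor set, the algorithm $\mathcal{M}$ returns the identical value on both. Concretely, I would fix $i \ne j$ and assume $C(q_{b_i}) = C(q_{b_j})$, aiming to conclude $\mathcal{M}(q_{b_i}) = \mathcal{M}(q_{b_j})$; the lemma is then immediate by contraposition.

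The first step is to recall from Definition~\ref{def:query} and the definition of contributors that the true count of any query equals the size of its contributor set, i.e.\ $q(D) = \sum_{x \in D} q(x) = |C(q)|$. Hence the assumption $C(q_{b_i}) = C(q_{b_j})$ forces the true counts to coincide: writing $n_i = q_{b_i}(D)$ and $n_j = q_{b_j}(D)$, we have $n_i = n_j$.

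Next I would split on the suppression test of Algorithm~\ref{algo:bounded-noise}. If $n_i = n_j \le s$, both queries trigger the suppression branch and the algorithm returns $0$ in each case, so the outputs agree. If instead $n_i = n_j > s$, both queries reach the noise-adding branch. Here I invoke the second exceptional property of $\mathcal{M}$ (equivalently, the behaviour of the noise dictionary $\mathsf{nd}$ keyed on contributor sets): since the two queries share the same key $C(q_{b_i}) = C(q_{b_j})$, the same noise value $e$ is used for both---whichever query is processed first samples and stores $e$, and the second simply retrieves that stored $e$. Therefore $\mathcal{M}(q_{b_i}) = n_i + e = n_j + e = \mathcal{M}(q_{b_j})$. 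In both cases the outputs coincide, which establishes the contrapositive and hence the lemma.

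I expect no serious obstacle here, as the statement is essentially a direct reading of how $\mathcal{M}$ treats repeated contributor sets. The only points requiring care are (i) the stateful nature of $\mathsf{nd}$, so that one must observe the equality of the noise holds \emph{regardless of the order} in which $q_{b_i}$ and $q_{b_j}$ are submitted, and (ii) the suppression edge case, in which no noise is actually added but the deterministic output $0$ still renders the two answers equal. Neither introduces any real difficulty, and the hypothesis that the $b_i$ are distinct attribute values plays no role beyond fixing the queries under consideration.
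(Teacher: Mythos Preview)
Your proposal is correct and follows the same contrapositive route as the paper's own proof. Your version is in fact more careful than the paper's, which simply asserts that equal contributor sets force equal noise and hence equal outputs; you additionally spell out the observation $q(D)=|C(q)|$ and separately handle the suppression branch, both of which the paper leaves implicit.
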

\begin{proof}
See Appendix~\ref{app:proofs}.
\end{proof}
Now, define the random variable
\begin{equation}
Z = (n_1 + E_1) + (n_2 + E_2) - (n + E_3) = E_1 + E_2 - E_3 \label{eq:rem-err}
\end{equation}
where $E_i$ are i.i.d. random variables with distribution $\mathbb{U}_{\pm r}$. Since $E_i \le r$, we have $Z \le 3r$, which would happen if $E_1 = E_2 = r$ and $E_3 = 
-r$. Our attack can be summarised as follows: 
\begin{enumerate}
	\item Find an attribute $A$ with only two attribute values $a_1$ and $a_2$ (e.g., gender).
	\item Find $m$ different attribute values $b_1, \ldots, b_m$ from \emph{any} number of attributes such that $\mathcal{M}(q_{b_i} \wedge q_{a_1})$ and $\mathcal{M}(q_{b_i} \wedge q_{a_2})$ are greater than $0$ implying that $\mathcal{M}(q_{b_i}) > 0$. This ensures the condition of Lemma~\ref{lem:e:ind}. Furthermore, ensure that the contributors of all queries $q_{b_i}$ are different. Lemma~\ref{lem:diff:cont} shows how to ensure this. 
	\item For the $i$th attribute value ($b_i$) obtain $z_i$ which is an instance of the random variable $Z$ in Eq.~\ref{eq:rem-err}.
	\item Let $z_{\max}$ be the maximum of the $m$ values. We then return $\lceil \frac{z_{\max}}{3} \rceil$ as the guess for $r$. 
\end{enumerate}
We can in fact do better by also keeping track of the minimum values. Let $z_{\min}$ be the minimum of the $m$ 
values. Notice that $Z \ge -3r$. Our guess for $r$ is then $\max\{ - \lceil \frac{z_{\min}}{3} 
\rceil, \lceil \frac{z_{\max}}{3} \rceil \}$. The guess for $r$ would be correct as long as either $-z_{\min}$ or $z_{\max}$ is greater than $3(r - 1)$. The exact algorithm is described in Algorithm~\ref{algo:pert-finder}. 
\begin{algorithm}[!ht]
\SetAlgoLined
\SetAlCapSkip{1em}
\DontPrintSemicolon{}
\let\oldnl\nl% Store \nl in \oldnl
\newcommand{\nonl}{\renewcommand{\nl}{\let\nl\oldnl}}% Remove line number for one line
%\nonl\TitleOfAlgo{{Select Features}}
\SetKwInOut{Input}{Input}
\SetKwInOut{State}{State}
\Input{$m$ distinct attribute values $b_1, \ldots, b_m$, attribute $A$ of cardinality $2$ with attributes $a_1$ and $a_2$, all satisfying $\mathcal{M}(q_{b_i} \wedge q_{a_1}), \mathcal{M}(q_{b_i} \wedge q_{a_2}) > 0$ and $\mathcal{M}(q_{b_i}) \neq \mathcal{M}(q_{b_j})$, for $i, j \in [m]$, $i \neq j$.}
%\State{A noise dictionary, denoted $\mathsf{nd}$, with keys from subsets of $U$ and values in $\mathbb{Z}_{\pm r}$.}
Set $z_{\min} \leftarrow \infty$ and $z_{\max} \leftarrow -\infty$.\;
\For{$i = 1$ \KwTo $m$}{
	Run Attribute Analyzer with inputs $(b_i, \{a_1\})$, $(b_i, \{a_2\})$ and $(b_i, \emptyset)$ and get outputs $z_1$, $z_2$ and $z_3$, respectively.\;
	Set $z = z_1 + z_2 - z_3$.\;
	\If {$z > z_{\max}$}{
		$z_{\max} \leftarrow z$.\;
	}
	\If {$z < z_{\min}$}{
		$z_{\min} \leftarrow z$.\;
	}
}
Let $r' = \max\{ - \lceil \frac{z_{\min}}{3} \rceil, \lceil \frac{z_{\max}}{3} \rceil \}$.\;
Output $r'$.\;
\caption{The Perturbation Finder Algorithm.}
\label{algo:pert-finder}
\end{algorithm}

We will show that the algorithm returns the correct perturbation $r$ with high probability, depending on a suitable choice for $m$. 
\begin{lemma}
\label{lem:twenty}
Let $r \ge 1$, and let $E_1$, $E_2$ and $E_3$ be variables that take values in $\mathbb{Z}_{\pm r}$. Out of the $(2r+1)^3$ possible values of the tuple $(E_1, E_2, E_3)$, there are precisely 20 that satisfy $E_1 + E_2 + E_3 > 3(r-1)$ or $E_1 + E_2 + E_3 < -3(r-1)$.
\end{lemma}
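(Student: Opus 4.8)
The plan is to exploit the symmetry between the two tail conditions and reduce the whole problem to a small stars-and-bars count. First I would observe that $3(r-1) = 3r - 3$, so the condition $E_1 + E_2 + E_3 > 3(r-1)$ is equivalent to the integer sum $S := E_1 + E_2 + E_3$ lying in $\{3r-2, 3r-1, 3r\}$ (recall that $S$ is bounded above by $3r$), and likewise $S < -3(r-1)$ is equivalent to $S \in \{-3r, -3r+1, -3r+2\}$. Since the map $(E_1, E_2, E_3) \mapsto (-E_1, -E_2, -E_3)$ is a bijection of $\mathbb{Z}_{\pm r}^3$ onto itself that negates $S$, the number of tuples satisfying the upper condition equals the number satisfying the lower one, so it suffices to count the former and double. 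I would also check that for $r \ge 1$ the two ranges are disjoint (since $3r-2 > -3r+2$ reduces to $6r > 4$), so no tuple is counted twice.

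To count the tuples with $S \in \{3r-2, 3r-1, 3r\}$, I would change variables to $F_i := r - E_i$, so that each $F_i$ ranges over $\{0, 1, \ldots, 2r\}$ and $S = 3r - (F_1 + F_2 + F_3)$. The three target values of $S$ then correspond exactly to $F_1 + F_2 + F_3 \in \{0, 1, 2\}$. Counting nonnegative-integer solutions by stars and bars gives $\binom{2}{2} = 1$ solution for sum $0$, $\binom{3}{2} = 3$ for sum $1$, and $\binom{4}{2} = 6$ for sum $2$, for a subtotal of $10$. Doubling for the symmetric lower tail yields the claimed count of $20$.

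The only subtlety — and the one place I would be careful — is the upper bound $F_i \le 2r$ in the stars-and-bars step, since the unconstrained count overcounts whenever some $F_i$ exceeds $2r$. But the relevant sums $F_1 + F_2 + F_3$ are at most $2$, and because $r \ge 1$ we have $2r \ge 2$, so no individual $F_i$ can exceed $2r$ and the cap is never active. Hence the unconstrained binomial counts are already exact, and the argument is complete.
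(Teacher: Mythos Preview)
Your proof is correct. The paper, by contrast, argues directly that no $E_i$ can be at most $r-3$ (otherwise the sum is at most $3(r-1)$), then explicitly enumerates in a table all ten ordered triples with each $E_i \in \{r-2, r-1, r\}$ whose sum exceeds $3(r-1)$, and finally invokes symmetry for the lower tail. Your substitution $F_i = r - E_i$ followed by a stars-and-bars count is more systematic: it replaces an ad hoc table by a one-line binomial computation, and it would scale effortlessly if the threshold were changed (e.g., to $E_1+E_2+E_3 > 3r - c$ for other small $c$). The paper's enumeration has the minor advantage of being completely self-contained and immediately checkable by eye, but your argument is cleaner and your explicit verification that the upper cap $F_i \le 2r$ is inactive (and that the two tails are disjoint for $r \ge 1$) makes it just as rigorous.
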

\begin{proof}
See Appendix~\ref{app:proofs}.
\end{proof}

\begin{proposition}
\label{prop:pertsucc}
Let $r'$ be the output of Perturbation Finder. Then
\[
\Pr [ r' = r ] = 1 - \left(1 - \frac{20}{(2r+1)^3} \right)^m
\] 
\end{proposition}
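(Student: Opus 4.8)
The plan is to write the success probability in the standard ``at least one success out of $m$ independent trials'' form, $1-(1-p)^m$, and to identify the per-trial success probability $p$ with the tail count of Lemma~\ref{lem:twenty}. First I would fix notation: each pass of the loop in Perturbation Finder (Algorithm~\ref{algo:pert-finder}) produces a value $z_i = z_1 + z_2 - z_3$ which, by Eq.~\ref{eq:rem-err}, is a realization of $Z = E_1 + E_2 - E_3$ with the $E_j$ i.i.d.\ $\mathbb{U}_{\pm r}$. I would then argue that $z_1,\dots,z_m$ are \emph{mutually independent} realizations of $Z$. Within one iteration, Lemma~\ref{lem:e:ind} supplies independence of the three noise terms (its hypothesis $n_1,n_2 > r$ is exactly the precondition $\mathcal{M}(q_{b_i}\wedge q_{a_1}),\mathcal{M}(q_{b_i}\wedge q_{a_2})>0$ imposed on the inputs). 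Across iterations, the input requirement that the $C(q_{b_i})$ be pairwise distinct (realizable via Lemma~\ref{lem:diff:cont}) ensures $\mathcal{M}$ samples fresh noise each time rather than returning a cached value, so the $z_i$ are independent.

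Next I would compute the per-iteration tail probability. Define the event $S_i$ that $z_i > 3(r-1)$ or $z_i < -3(r-1)$. To evaluate $\Pr[S_i]$ I would transport Lemma~\ref{lem:twenty} from $E_1+E_2+E_3$ to $Z=E_1+E_2-E_3$: since $-E_3$ is again uniform on $\mathbb{Z}_{\pm r}$, the substitution $E_3'=-E_3$ is a bijection of the uniform product space $\mathbb{Z}_{\pm r}^3$ onto itself under which $Z = E_1+E_2+E_3'$, so exactly the same $20$ of the $(2r+1)^3$ equally likely tuples land $Z$ in the two tails. Hence
\[
\Pr[S_i] = \frac{20}{(2r+1)^3} =: p .
\]

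The core step is the correctness characterization: $r' = r$ if and only if at least one $S_i$ occurs. Because every $E_j \in [-r,r]$ we have $-3r \le z_i \le 3r$, whence $\lceil z_{\max}/3\rceil \le r$ and $-\lceil z_{\min}/3\rceil \le r$, so $r' \le r$ always and the algorithm never overestimates. For the matching direction I would check that the positive ceiling term equals $r$ exactly when $z_{\max} > 3(r-1)$ and the negative term attains $r$ exactly when $z_{\min} < -3(r-1)$; since $z_{\max} > 3(r-1)$ holds iff some $z_i$ does and $z_{\min} < -3(r-1)$ holds iff some $z_i$ does, this yields $\{r'=r\} = \bigcup_{i=1}^m S_i$. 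Combining with independence gives $\Pr\!\left[\overline{S_1}\cap\cdots\cap\overline{S_m}\right] = (1-p)^m$, so $\Pr[r'=r] = 1-(1-p)^m$, which is the stated expression.

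The main obstacle I expect is this correctness characterization: one must pin down precisely which integer values of $z_{\max}$ and $z_{\min}$ drive the two ceiling terms up to $r$, and verify that this set of values coincides with the tail region $\{z_i > 3(r-1)\}\cup\{z_i < -3(r-1)\}$ counted by Lemma~\ref{lem:twenty}. The rounding on the $z_{\min}$ side in particular needs care, so that the negative tail is detected symmetrically with the positive tail and the count is genuinely $20$ rather than something smaller. A secondary subtlety is the cross-iteration independence of the $z_i$, which I would discharge by appealing to the distinct-contributor precondition (Lemma~\ref{lem:diff:cont}) guaranteeing fresh noise; once both points are settled, the conclusion is an immediate application of independence.
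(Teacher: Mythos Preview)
Your proposal is correct and follows essentially the same route as the paper: invoke Lemma~\ref{lem:twenty} (transported to $E_1+E_2-E_3$ via the bijection $E_3\mapsto -E_3$, which the paper relegates to a footnote) for the per-trial tail probability $20/(2r+1)^3$, use Lemmas~\ref{lem:e:ind} and~\ref{lem:diff:cont} for within- and across-iteration independence, and conclude via the complement of $m$ independent failures. Your flagged concern about the $z_{\min}$ rounding is well-founded and in fact sharper than the paper's treatment: with $-\lceil z_{\min}/3\rceil$ as written the negative tail $\{-3r,-3r+1,-3r+2\}$ is \emph{not} detected symmetrically (only $z_{\min}=-3r$ yields $r$), so the exact count of $20$ requires reading the estimator as $\lceil -z_{\min}/3\rceil$; the paper's proof simply does not engage with this point.
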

\begin{proof}
See Appendix~\ref{app:proofs}.   
\end{proof}

For a given probability of success, larger perturbations require more queries to $\mathcal{M}$ (through the Perturbation Finder algorithm). However, note that larger values of $r$ are not desirable from a utility point of view. For each attribute value, the Attribute Analyser makes 3 calls to Bounded Noisy Counts $\mathcal{M}$. Thus, for a total of $m$ attributes we have $t = 3m$ queries to $\mathcal{M}$. Figure~\ref{fig:pertsucc} shows the number of queries $t$ required for a given probability of success. Note that smaller values of $r$, i.e., $\le 5$, which are desirable from a utility point of view need less than $t = 600$ for a 95\% success rate. 

\begin{figure}[ht!]
\centering
\ifpets
\includegraphics[width=\columnwidth]{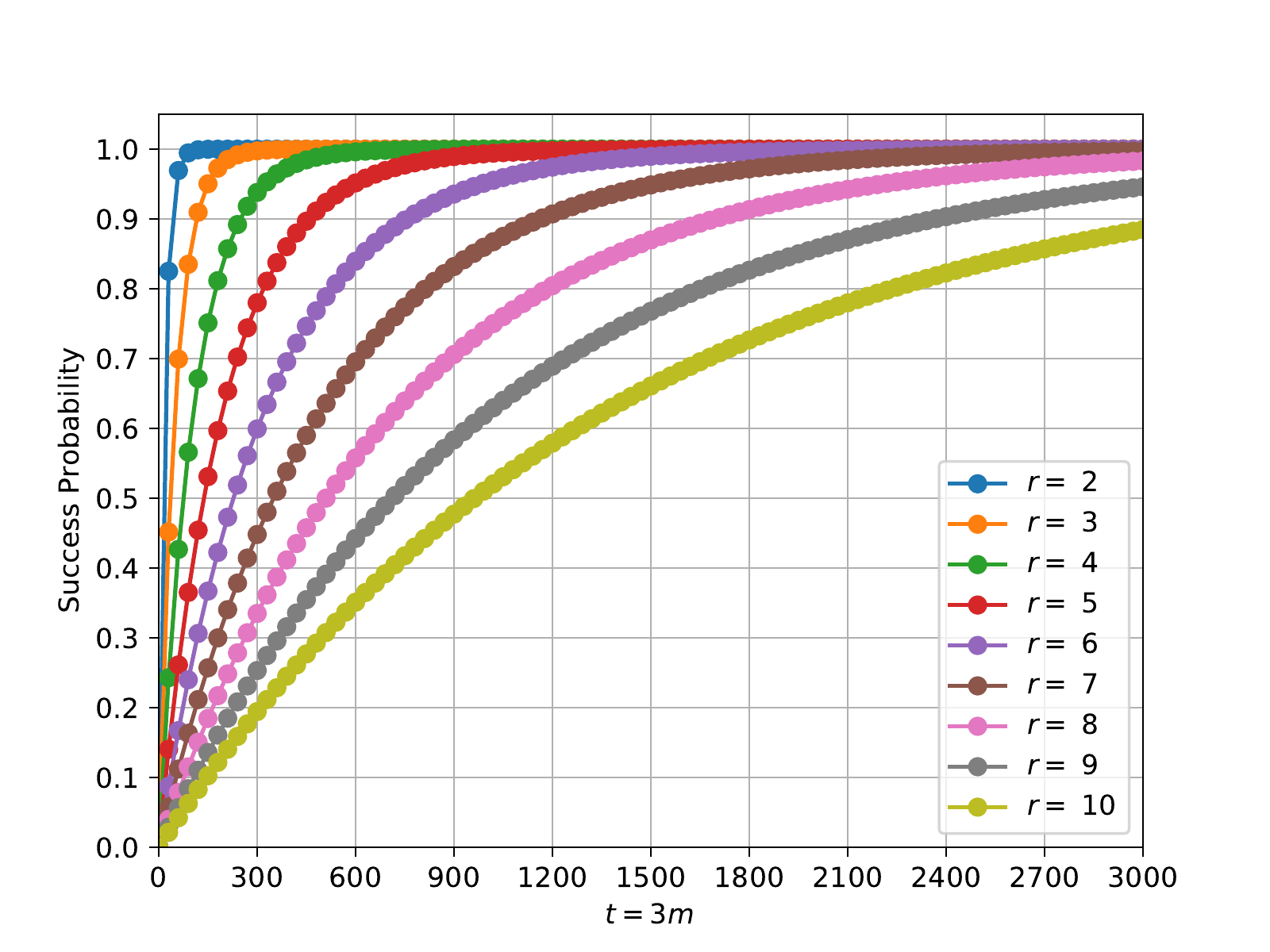}
\else
\includegraphics[scale=0.8]{figures/pertsucc.pdf}
\fi
\caption{Probability of successfully finding the perturbation parameter $r$ as a function of the number of queries $t$ to Bounded Noisy Counts in the Perturbation Finder algorithm.  Higher perturbations require a much larger number of attribute values $m$ (and hence number of queries).}
\label{fig:pertsucc}
\end{figure}

\begin{remark}
We have assumed for simplicity that $A$ is an attribute with exactly two attribute values. In general, the attack is applicable to any attribute with \emph{at least} two attribute values. In this case, we run the Attribute Analyser with inputs corresponding to the two selected attribute values, plus the input $(b_i, A')$.
\end{remark}

\begin{remark}
Again for simplicity, in the $i$th iteration of Algorithm~\ref{algo:pert-finder}, we run the Attribute Analyser on three different inputs. These can be replaced by a single input $(b_i, A')$, where $A' = \{a_1, a_2\} \subseteq A$. The output of Attribute Analyser will by definition return (noisy) answers to the queries $q_b \wedge q_{a_1}$, $q_b \wedge q_{a_2}$ and $q_b \wedge q_{A'}$ as desired (step 5 of Algorithm~\ref{algo:att-analyze}). Thus, while this constitutes 3 queries to Bounded Noisy Counts, this is only a single query to the Attribute Analyser. The latter resembles the TableBuilder tool interface. The results summarised in Section~\ref{sec:intro} are with respect to the Attribute Analyser, which simulates the TableBuilder interface.
\end{remark}

%\begin{remark}
%Note that the algorithm does not necessarily require a single attribute ($B$) with at least $m$ attributes values. We could use other attributes as well, as long as we are sure that the set of contributors for the resulting queries are different. Note that it is easy to check if the set of contributors are different: if the noisy answers are different then the two query answers have necessarily different contributors. Otherwise, we have assigned different noise to the same set of contributors, contradicting the algorithm. On the other hand, if the noisy answers are the same, then it could possibly be the same set of contributors or different sets of contributors whose noisy answers are the same. In this case we cannot differentiate between the two. Thus, we look for query answers that are different giving us different sets of contributors.  
%\end{remark}

\subsection{Attack 2: Removing Noise}
Consider again a tuple of attributes $\mathbf{B}$ (possibly empty) from $D$, and let $\mathbf{b} \in \mathbf{B}$ denote a vector of attribute values from $\mathbf{B}$ defined as before. Consider an attribute $A$ with $m$ attribute values $a_1, \ldots, a_m$. In this section, we will show an attack that finds the exact answer to $q_{\mathbf{b}} \wedge q_A$ by using the Attribute Analyzer as a black box. We will then show how to use this algorithm to find the true answer to any \emph{target} query $q_{\mathbf{b}} \wedge q_{a_i}$. Continuing on, we can find the true answers to all queries $q_{\mathbf{b}} \wedge q_{a_i}$, $i \in [m]$. We will first assume that $\mathcal{M}(q_{\mathbf{b}} \wedge q_{a_i}) \neq 0$ for all $i$, for simplicity. Later on we will show that this assumption can be relaxed as long as we have some $m' < m$ attributes from $A$ satisfying $\mathcal{M}(q_{\mathbf{b}} \wedge q_{a_i}) \neq 0$. 
%Note that Proposition~\ref{prop:bnc} shows how to ensure that this condition holds.

We begin with a simple observation on $A$. Recall that a two-partition of a set $A$ is a partition of $A$ with exactly two subsets of $A$. 
%Consider now an attribute value $b \in B$, and and attribute $A$ with $m$ attribute values $a_1, \ldots, a_m$. 
\begin{lemma}
\label{lem:numtwoparts}
There are exactly $2^{m - 1} - 1$ two-partitions of the set $A$.  
\end{lemma}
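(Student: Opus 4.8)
The plan is to count two-partitions by first counting \emph{ordered} pairs of parts and then correcting for the fact that a partition is unordered. A two-partition of $A$ is an unordered pair $\{S, A \setminus S\}$ where both $S$ and $A \setminus S$ are nonempty. So first I would count the ordered version: the number of ways to assign each of the $m$ elements of $A$ to one of two labelled sides. Each element independently has $2$ choices, giving $2^m$ assignments in total. Of these, exactly $2$ are degenerate — the one putting every element on the first side and the one putting every element on the second side — and these are precisely the assignments in which one of the two parts is empty. Removing them leaves $2^m - 2$ ordered pairs $(S, A \setminus S)$ with both parts nonempty.

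Next I would account for the unordered nature of a partition. Each unordered two-partition $\{S, A \setminus S\}$ with $S \neq A \setminus S$ arises from exactly two ordered pairs, namely $(S, A \setminus S)$ and $(A \setminus S, S)$. Since $m \ge 1$ and both parts are nonempty, we always have $S \neq A \setminus S$ (they are disjoint and nonempty, hence unequal), so there is no fixed point to worry about and the two-to-one correspondence is exact. Dividing by $2$ gives
\[
\frac{2^m - 2}{2} = 2^{m-1} - 1,
\]
as claimed.

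As an alternative derivation that sidesteps the division-by-two bookkeeping entirely, I would fix a distinguished element $a_1 \in A$. In any two-partition exactly one part contains $a_1$; call it $S$. This sets up a bijection between two-partitions and subsets $S \subseteq A$ with $a_1 \in S$ and $S \neq A$ (the latter condition forcing the complementary part to be nonempty). The subsets containing $a_1$ are determined by a free choice on the remaining $m-1$ elements, giving $2^{m-1}$ of them; excluding the single subset $S = A$ yields $2^{m-1} - 1$.

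The only real subtlety — and the step I would be most careful about — is the unordered-versus-ordered correction: one must verify that no two-partition is its own reflection, i.e. that $S = A \setminus S$ never occurs, which would otherwise make the factor-of-two division incorrect. This holds because the two parts are disjoint and nonempty. The distinguished-element argument avoids this issue cleanly, so I would likely present that as the primary proof and mention the counting argument as a sanity check.
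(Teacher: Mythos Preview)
Your proposal is correct, and your first argument is essentially the paper's proof: count the $2^m - 2$ nonempty proper subsets of $A$ and divide by two since each two-partition is counted twice. Your explicit check that $S \neq A \setminus S$ (so the factor-of-two correction is exact) and your alternative distinguished-element argument go beyond what the paper writes, but the core approach is the same.
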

\begin{proof}
See Appendix~\ref{app:proofs}.
\end{proof}
We will let $P_A$ denote the set of all two-partitions of $A$. The following result shows that all sets in $P_A$ have different contributors.
\begin{lemma}
\label{lem:twopart}
Assume $\mathcal{M}(q_\mathbf{b} \wedge q_{a_i}) \ne 0$, for all $i \in [m]$. Let $\mathcal{A}$ be a two-partition in $P_A$, and let $A' \in \mathcal{A}$ be any of the two sets in $\mathcal{A}$. Let $A''$ be either the other set in $\mathcal{A}$ or any of the two sets from any other partition in $P_A$. Then, $C(q_\mathbf{b} \wedge q_{A'}) \ne C(q_b \wedge q_{A''})$.
\end{lemma}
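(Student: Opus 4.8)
The plan is to reduce the statement to a purely combinatorial fact about the ``atomic'' contributor sets. For each $i \in [m]$, write $S_i = C(q_\mathbf{b} \wedge q_{a_i})$, the set of individuals that satisfy $\mathbf{b}$ and whose value under $A$ is $a_i$. Since $q_{A'} = \vee_{a_i \in A'} q_{a_i}$, a row satisfies $q_\mathbf{b} \wedge q_{A'}$ if and only if it satisfies $\mathbf{b}$ and its $A$-value lies in $A'$; hence
\[
C(q_\mathbf{b} \wedge q_{A'}) = \bigcup_{a_i \in A'} S_i .
\]
Thus it suffices to show that the map $A' \mapsto \bigcup_{a_i \in A'} S_i$ is injective on the non-empty subsets of $A$, and then to check that $A'$ and $A''$ are always non-empty and distinct.

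First I would establish two properties of the $S_i$. By the standing assumption that each $u \in U$ takes exactly one value under $A$, the sets $S_1, \ldots, S_m$ are pairwise disjoint: an individual with $A$-value $a_i$ cannot simultaneously have $A$-value $a_j$ for $j \ne i$. Second, the hypothesis $\mathcal{M}(q_\mathbf{b} \wedge q_{a_i}) \ne 0$ together with Proposition~\ref{prop:bnc} (parts (a) and (c)) forces $\mathcal{M}(q_\mathbf{b} \wedge q_{a_i}) > 0$ and hence $S_i \ne \emptyset$ for every $i$.

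With non-empty, pairwise disjoint $S_i$ in hand, injectivity follows by a recovery argument. Fix a non-empty $A'$ and set $T = \bigcup_{a_i \in A'} S_i$. For each index $i$ I claim that $a_i \in A'$ if and only if $S_i \cap T \ne \emptyset$: if $a_i \in A'$ then $S_i \subseteq T$, and $S_i \ne \emptyset$ gives $S_i \cap T \ne \emptyset$; if $a_i \notin A'$ then $T$ is a union of sets $S_j$ each disjoint from $S_i$, so $S_i \cap T = \emptyset$. Hence $A'$ is determined by $T$, which proves injectivity on non-empty subsets.

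Finally I would verify $A' \ne A''$ in each case of the statement. If $A''$ is the other block of $\mathcal{A}$ then $A'' = A \setminus A'$, which differs from $A'$ since both blocks are non-empty (a non-empty set never equals its own complement). If $A''$ is a block of a different two-partition $\mathcal{A}' \ne \mathcal{A}$, then $A'' = A'$ would force $\mathcal{A}'$ to contain $A'$ and its complement $A \setminus A'$, i.e.\ $\mathcal{A}' = \mathcal{A}$, a contradiction. Since $A'$ and $A''$ are distinct non-empty subsets, injectivity yields $C(q_\mathbf{b} \wedge q_{A'}) \ne C(q_\mathbf{b} \wedge q_{A''})$, as required. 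The combinatorial heart is the disjointness-plus-nonemptiness that makes the union map injective; the only mildly delicate point is this last bookkeeping step of confirming distinctness across different partitions, but it reduces entirely to the uniqueness of the complement within a two-partition.
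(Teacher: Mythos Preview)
Your proof is correct and rests on the same two facts the paper uses: the atomic contributor sets $S_i = C(q_\mathbf{b}\wedge q_{a_i})$ are pairwise disjoint (each user has a single $A$-value) and non-empty (Proposition~\ref{prop:bnc}(c)). The paper packages this as a short contradiction via the symmetric difference $A' \Delta A''$: if the two contributor sets coincided, then any $a$ in $A' \Delta A''$ would have no contributors, contradicting the hypothesis. You instead prove injectivity of $A' \mapsto \bigcup_{a_i\in A'} S_i$ directly and then check $A'\ne A''$ case by case. The content is the same; your version is a bit more explicit on the bookkeeping (the paper takes $A'\ne A''$ as evident), while the paper's symmetric-difference phrasing is marginally shorter.
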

\begin{proof}
See Appendix~\ref{app:proofs}.
\end{proof}

\begin{example}
\label{ex:part}
Consider the dataset in Table~\ref{table:d-example}. Let $A = \text{Suburb}$. Then $A$ has $m = 4$ attribute values: D = Darlinghurst, N = Newtown, R = Redfern and S = Surry Hills. The $2^{m - 1} - 1 = 7$ two-partitions of $A$ are as follows:

\ifpets
\begin{center}
\resizebox{\columnwidth}{!}{%
\begin{tabular}{c|c|c|c|c|c|c|c} 
$A_1$ & \{D\} & \{N\} & \{R\} & \{S\} & \{D, N\} & \{D, R\} & \{D, S\} \\
$A_2$ & \{N, R, S\} & \{D, R, S\} & \{D, N, S\} & \{D, N, R\} & \{R, S\} & \{N, S\} & \{N, R\} \\
\end{tabular}%
}
\end{center}
\else
\begin{center}
\begin{tabular}{c|c|c|c|c|c|c|c} 
$A_1$ & \{D\} & \{N\} & \{R\} & \{S\} & \{D, N\} & \{D, R\} & \{D, S\} \\
$A_2$ & \{N, R, S\} & \{D, R, S\} & \{D, N, S\} & \{D, N, R\} & \{R, S\} & \{N, S\} & \{N, R\} \\
\end{tabular}%
\end{center}
\fi

% \begin{center}
% \begin{tabular}{c|c} 
% $A_1$ & $A_2$ \\
% \hline\hline
%  \{D\} & \{N, R, S\} \\ 
%  \{N\} & \{D, R, S\} \\ 
%  \{R\} & \{D, N, S\} \\ 
%  \{S\} & \{D, N, R\} \\ 
%  \{D, N\} & \{R, S\} \\ 
%  \{D, R\} & \{N, S\} \\ 
%  \{D, S\} & \{N, R\} \\
% \end{tabular}
% \end{center}

Let $\mathbf{B} = \emptyset$ and hence $\mathbf{b} = \emptyset$. Then $q_{\mathbf{b}} \wedge q_{a_i} = q_{a_i}$ for all $a_i \in A$, $i \in \{1, 2, 3, 4\}$. Also, from Table~\ref{table:d-example}, $C(q_{a_i}) \neq \emptyset$, for all $a_i \in A$. Let $A'$ be any of the 14 sets in the table above, and let $A'' \neq A'$ be any of the remaining 13 sets. Then, according to Lemma~\ref{lem:twopart}, $C(q_{A'}) \neq C(q_{A''})$. One can easily verify through Table~\ref{table:d-example} that this is indeed true.\qed 
\end{example}

%Let $C = C(q_b)$. 
%Assume also that for all $i \in [m]$, $C(q_b \wedge q_{a_i}) > 0$. 
Let $n = |C(q_\mathbf{b})| = |C(q_\mathbf{b} \wedge q_A)|$, which we seek to find through the attack. Consider a partition $\{A_1, A_2\}$ in $P_A$, and note that $(q_\mathbf{b} \wedge q_{A_1})(D) + (q_\mathbf{b} \wedge q_{A_2})(D) = n$.
Now consider the queries $(\mathbf{b}, A_1)$ and $(\mathbf{b}, A_2)$ to the Attribute Analyser. In return, among other answers, we get $\alpha_{A_1}$ and $\alpha_{A_2}$, which are the noisy answers to the two (total) queries mentioned above. Adding the two, we have $z = n + e_1 + e_2$, where $e_1$ and $e_2$ are unknown error terms from $\mathbb{Z}_{\pm r}$. Our attack is as follows: for each of the $k = 2^{m - 1} - 1$ partitions in $P_A$, query the Attribute Analyser with the two sets in the partition, add the answers, and average them over all $k$. The algorithm is shown in Algorithm~\ref{algo:noise-remover}.

\begin{algorithm}[!ht]
\SetAlgoLined
\SetAlCapSkip{1em}
\DontPrintSemicolon{}
\let\oldnl\nl% Store \nl in \oldnl
\newcommand{\nonl}{\renewcommand{\nl}{\let\nl\oldnl}}% Remove line number for one line
%\nonl\TitleOfAlgo{{Select Features}}
\SetKwInOut{Input}{Input}
\SetKwInOut{State}{State}
\Input{A vector of attribute values $\mathbf{b} \in \mathbf{B}$, attribute $A$ with $m$ different attribute values and set $P_A$ of two-partitions of $A$.}
%\State{A noise dictionary, denoted $\mathsf{nd}$, with keys from subsets of $U$ and values in $\mathbb{Z}_{\pm r}$.}
Initualize $z \leftarrow 0$.\;
\For{each two-partition $\{A_1, A_2\}$ in $P_A$}{
	Query the Attribute Analyzer with inputs $(\mathbf{b}, A_1)$ and $(\mathbf{b}, A_2)$ and obtain $\alpha_{A_1}$ and $\alpha_{A_2}$.\;
	Update $z \leftarrow z + \alpha_{A_1} + \alpha_{A_2}$.\;
}
Let $k = 2^{m - 1} - 1$ and obtain $z \leftarrow z/k$.\;
Output $\lfloor z \rceil$.\;
\caption{The Noise Remover Algorithm.}
\label{algo:noise-remover}
\end{algorithm}

Notice that in each loop the Attribute Analyser queries Bounded Noisy Counts $\mathcal{M}$ twice. Therefore there are a total of $t = 2k = 2^m - 2$ queries to $\mathcal{M}$.

\subsubsection{Success Probability}
\label{subsub:succ-prob}
Let $Z_i$ denote the random variable denoting the sum in Step 4 of the algorithm for the $i$th partition, where $i  \in [k]$, $k = 2^{m - 1} - 1$. We have
\begin{equation}
\label{eq:zi}
Z_i = n + E_1^{(i)} + E_2^{(i)},
\end{equation}
where $E_1^{(i)}$ and $E_2^{(i)}$ are the noise variables. 
\begin{lemma}
\label{lem:twopart:ind}
For each $i \in [k]$, $E_1^{(i)}$ and $E_2^{(i)}$ are i.i.d. random variables with distribution $\mathbb{U}_{\pm r}$. Furthermore, $Z_1, \ldots, Z_k$ as defined by Eq.~\ref{eq:zi} are i.i.d. random variables.
\end{lemma}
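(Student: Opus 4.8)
The plan is to derive both claims directly from Lemma~\ref{lem:twopart} together with the sampling rule of $\mathcal{M}$ in Algorithm~\ref{algo:bounded-noise}: a fresh independent draw from $\mathbb{U}_{\pm r}$ is recorded in the noise dictionary $\mathsf{nd}$ for each \emph{distinct} contributor set, while a repeated contributor set reuses the stored value. So the entire argument reduces to checking that the queries issued during the attack all have pairwise distinct contributor sets, after which the dictionary semantics force independence and the identical distribution is automatic.

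First I would dispose of the suppression branch. The standing assumption $\mathcal{M}(q_\mathbf{b} \wedge q_{a_i}) \ne 0$ for all $i \in [m]$ means, by Proposition~\ref{prop:bnc} (the output is $0$ exactly when the true count is $\le s$, since $n > s \ge r$ forces $\alpha \ge n - s \ge 1$), that $(q_\mathbf{b} \wedge q_{a_i})(D) > s$ for every $i$. Consequently, for any nonempty $A' \subseteq A$ we have $(q_\mathbf{b} \wedge q_{A'})(D) \ge (q_\mathbf{b} \wedge q_a)(D) > s$ for any single $a \in A'$, so none of the total queries used in the attack is suppressed. Hence each such query returns a value of the form $n_{A'} + e$ with a genuine noise term, and every $E_1^{(i)}, E_2^{(i)}$ is a well-defined element of $\mathbb{Z}_{\pm r}$.

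Next I would establish that the contributor sets are all distinct. The $k = 2^{m-1} - 1$ two-partitions enumerate each nonempty proper subset of $A$ exactly once, paired with its complement, so the $2k = 2^m - 2$ subsets $A'$ queried over the whole run are pairwise distinct as subsets of $A$. Lemma~\ref{lem:twopart} states precisely that distinct such subsets yield distinct contributor sets $C(q_\mathbf{b} \wedge q_{A'})$, covering both the two halves of a single partition and sets drawn from different partitions. Assuming the dictionary is fresh at the start of the attack, each of these $2k$ distinct contributor sets therefore triggers the sampling branch of $\mathcal{M}$ exactly once, so the $2k$ noise terms are $2k$ mutually independent samples from $\mathbb{U}_{\pm r}$.

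The two claims then follow. Within partition $i$, the terms $E_1^{(i)}$ and $E_2^{(i)}$ are attached to two distinct contributor sets, hence are two of these independent uniform samples, which gives the i.i.d.\ conclusion. For the second claim, each $Z_i = n + E_1^{(i)} + E_2^{(i)}$ is a function of one pair of noise terms, and these pairs are disjoint across different $i$ because every noise term belongs to exactly one contributor set; as functions of disjoint blocks of mutually independent variables, $Z_1, \ldots, Z_k$ are independent, and each carries the common distribution of $n$ plus the convolution of two copies of $\mathbb{U}_{\pm r}$, so they are identically distributed as well. I expect the only delicate point to be the bookkeeping that the $2k$ queried subsets are genuinely distinct and partition the noise terms into disjoint pairs; once Lemma~\ref{lem:twopart} is invoked this is immediate, so the proof is essentially a careful application of the dictionary semantics rather than a computation.
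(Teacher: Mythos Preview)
Your proposal is correct and follows essentially the same route as the paper: invoke Lemma~\ref{lem:twopart} to conclude that all $2k$ queried subsets have pairwise distinct contributor sets, and then use the sampling rule of Algorithm~\ref{algo:bounded-noise} to deduce that the $2k$ noise terms are mutually independent $\mathbb{U}_{\pm r}$ draws, from which both claims follow. If anything, your treatment is more careful than the paper's, since you explicitly verify that the suppression branch is never triggered for the total queries $q_\mathbf{b} \wedge q_{A'}$ (by lifting the bound $(q_\mathbf{b} \wedge q_{a_i})(D) > s$ from the singleton assumption), a step the paper's proof leaves implicit.
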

\begin{proof}
See Appendix~\ref{app:proofs}.
\end{proof}
Now define $\overline{Z} = \frac{1}{k} \sum_{i = 1}^k Z_i$. The success probability of the Noise Remover is then given by $\Pr \left( \left\lvert \overline{Z} - n \right\rvert < 0.5 \right)$. Define $Y_i =  E_1^{(i)} + E_2^{(i)}$ and $\overline{Y} =  \frac{1}{k} \sum_{i = 1}^k Y_i $.Then
\begin{align}
\Pr \left( \left\lvert \overline{Z} - n \right\rvert < 0.5 \right) &= \Pr \left( \left\lvert \overline{Y} + \frac{1}{k} \sum_{i = 1}^k n - n \right\rvert < 0.5 \right) \nonumber\\
																						&= \Pr \left( \left\lvert \overline{Y} \right\rvert < 0.5 \right)  \nonumber
																						%\label{eq:noisesucc}
\end{align}
Thus, we will attempt to find $\Pr \left( \left\lvert \overline{Y} \right\rvert < 0.5 \right)$. We will first show a lower bound on this probability and then an exact expression.

\descr{Lower Bound on the Success Probability.} Using Chebyshev's inequality, we see that
\begin{equation}
\label{eq:cheby}
\Pr \left( \left\lvert \overline{Y} - \mathbb{E}(\overline{Y}) \right\rvert \ge \epsilon \right) \le \frac{\text{Var}(\overline{Y})}{\epsilon^2}.
\end{equation}
By setting $\epsilon = 0.5$, and putting in $k = 2^{m-1} - 1$, and the values of $\mathbb{E}(\overline{Y})$ and $\text{Var}(\overline{Y})$ (see Appendix~\ref{appsub:lower-bound}), we get
\begin{equation}
\label{eq:cheby-uniform}
\Pr \left( \left\lvert \overline{Y}  \right\rvert < 0.5 \right) \ge 1 - \frac{8r(r+1)}{3(2^{m-1} - 1)}.    
\end{equation}

%Thus, by having a large enough $m$, we can run the attribute analyzer on all partitions $P_A$, and average the result. We then round the answer as our guess for $n$. The inequality above gives us the probability of success. 
Figure~\ref{fig:bounded-nr-succ} shows lower bounds on the success probabilities against different perturbation parameters as a function of $t = 2k$, i.e., the number of queries to Bounded Noisy Counts. 

\begin{figure}[!ht]
\centering
\ifpets
\includegraphics[width=\columnwidth]{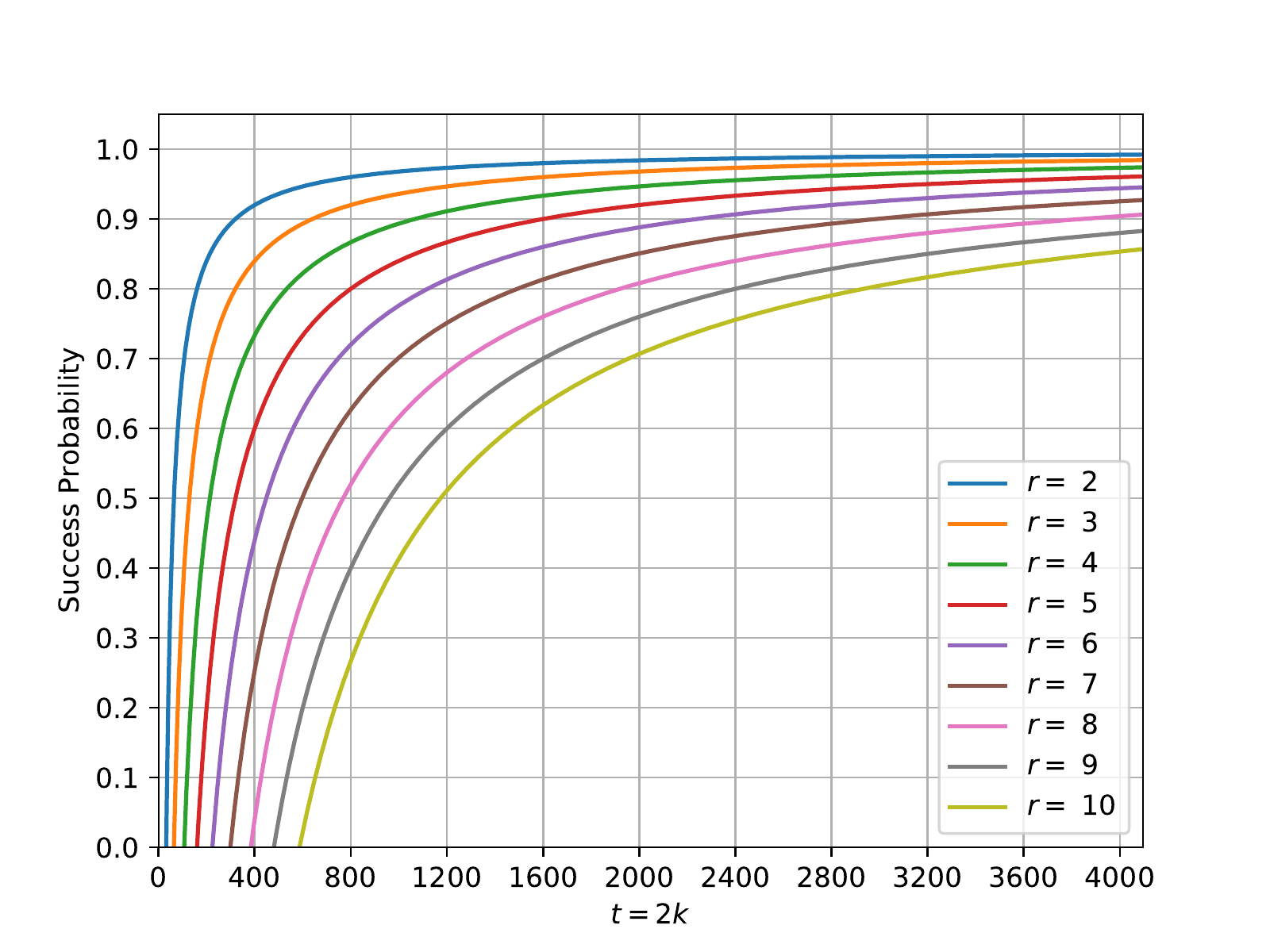}
\else
\includegraphics[scale=0.8]{figures/bounded-nr-succ.pdf}
\fi
\caption{Lower bounds on the probability of successfully retrieving the actual count $n = (q_{\mathbf{b}} \wedge q_A)(D)$ through the Noise Remover algorithm. Here $m = 12$ and hence $t = 2k$ ranges from $2$ to $2(2^{m - 1} - 1) = 4094$.}
\label{fig:bounded-nr-succ}
\end{figure}

\begin{remark}
Figure~\ref{fig:bounded-nr-succ} shows that we do not need to use all the two-partitions in $P_A$ to achieve a given probability of success. Furthermore, since each iteration calls the Attribute Analyser twice (one for each partition), we have a total of $2k$ calls to Attribute Analyser. Thus, if we were to run this algorithm on the TBE algorithm via the TableBuilder tool, this would require running the tool a total of $2k$ times. 
\end{remark}

\descr{Exact Success Probability.} Consider the sum $\sum_{i = 1}^k E_1^{(i)} + E_2^{(i)}$. Simplifying notation, we can view this as the sum of $2k$ i.i.d. random variables $E_i$ (due to Lemma~\ref{lem:twopart:ind}). The probability mass function of each $E_i$ is given by
\[
f_{E}(x) = 
\begin{cases}
\frac{1}{2r+1} &  \text{ if } x \in \mathbb{Z}_{\pm r}\\
0  & \text{ otherwise}
\end{cases}.
\]
In Appendix~\ref{appsub:exact}, we show that
\begin{equation}
\label{eq:nr:suc:exact}
\Pr \left( \left\lvert \overline{Y}  \right\rvert < 0.5 \right) = \sum_{x \in (-k/2, k/2)} f_{X_{2k}} (x),
\end{equation}
where $X_{2k}$ is the sum of $2k$ i.i.d. random variables $E_i$. Thus, we can evaluate Eq.~\ref{eq:nr:suc:exact} to find the exact success probability of the Noise Remover algorithm to obtain the answer $n = (q_{\mathbf{b}} \wedge q_A)(D)$. Figure~\ref{fig:sim-nr-succ} shows these success probabilities. Comparing this with Figure~\ref{fig:bounded-nr-succ}, we see that the actual success probability is higher for much smaller values of $t = 2k$.
\begin{figure}[ht!]
\centering
\ifpets
\includegraphics[width=\columnwidth]{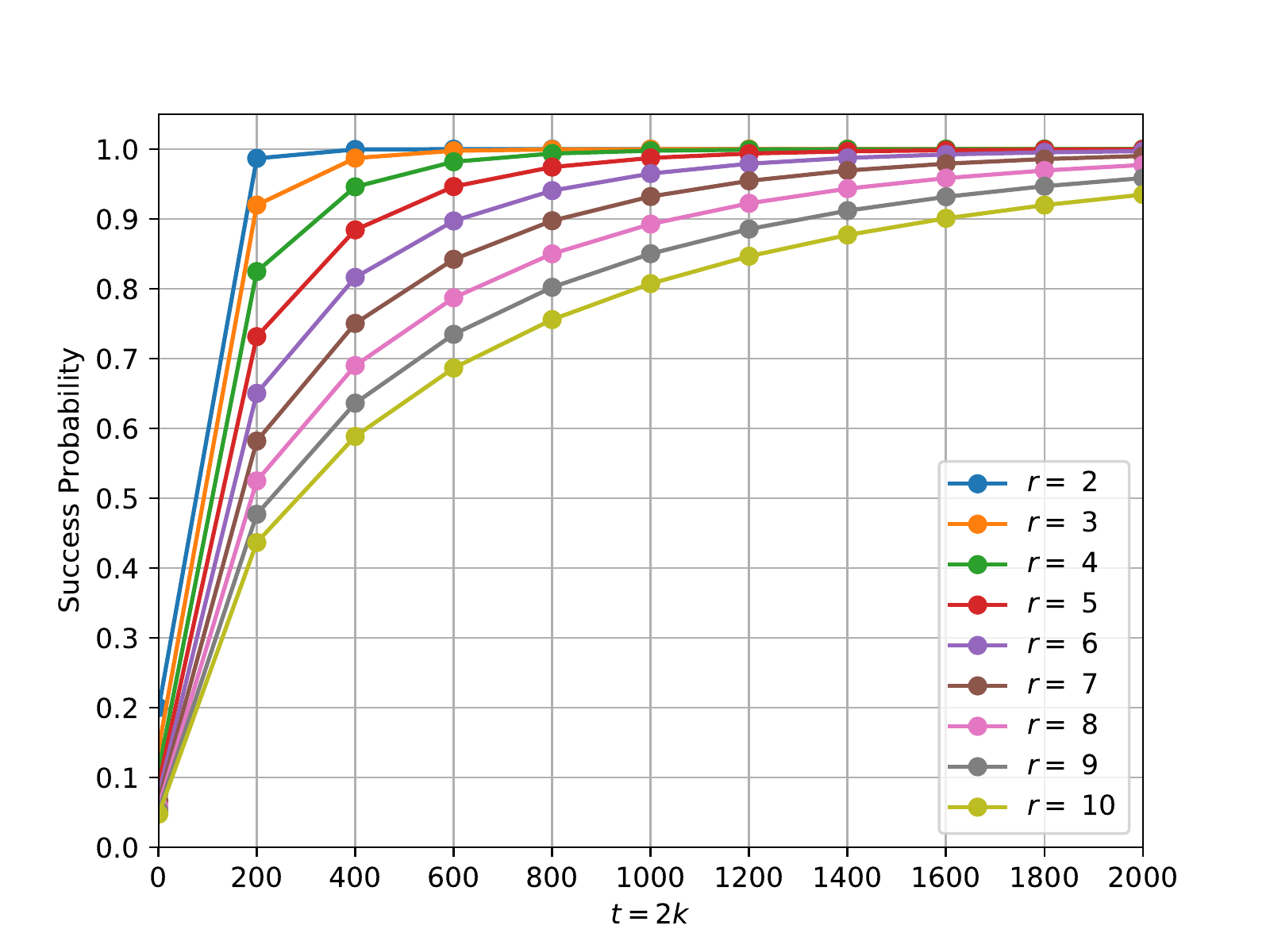}
\else 
\includegraphics[scale=0.8]{figures/act-nr-succ.pdf}
\fi 
\caption{Probability of successfully retrieving the actual count $n = (q_{\mathbf{b}} \wedge q_A)(D)$ through the Noise Remover algorithm. Here $t = 2k$ ranges from $2$ to $2000$.}
\label{fig:sim-nr-succ}
\end{figure}

\subsubsection{Broadening the Scope of Attack 2}
\label{subsub:broaden}
We now show that our attack can be used much more broadly.

\descr{Relaxing the Non-Zero Outputs Assumption.} We assumed in the previous section that all $m$ attribute values of $A$ satisfy $\mathcal{M}(q_{\mathbf{b}} \wedge q_{a_i}) \neq 0$, $i \in [m]$. First note from Figure~\ref{fig:sim-nr-succ} that we do not need all the two-partitions of $A$ to find $q_{\mathbf{b}} \wedge q_A$. The only requirement is to have a sufficient number of two-partitions $t$ to ``average out'' $n$. Thus, as long as we have $m' \le m$ number of attribute values whose corresponding queries have non-zero answers (via Bounded Noisy Counts), 
%\footnote{Recall that by Proposition~\ref{prop:bnc} it is easy to find attribute values whose corresponding queries have non-empty contributors.} 
we can use them to find the answer to the aforementioned query in the following way. Let $a_1, \ldots, a_{m'}, a_{m' + 1}, \ldots, a_m$ denote the $m$ attribute values of $A$, and assume (w.l.o.g.) that only upto $a_{m'}$ have $\mathcal{M}(q_{\mathbf{b}} \wedge q_{a_i}) > 0$. 
%(and hence $C(q_{\mathbf{b}} \wedge q_{a_i}) \neq \emptyset$). 
Therefore $\mathcal{M}(q_{\mathbf{b}} \wedge q_{a_i})$ is $0$ for all $m' < i  \le m$. Let $A''$ denote the set of attribute values $a_{m' + 1}, \ldots, a_m$. Note that $C(q_{\mathbf{b}} \wedge q_{A''})$ can be possibly empty. We first construct all two-partitions of the set $A' = A - A''$, resulting in $2^{m' - 1} - 1$ two-partitions. Denote this by $P_{A'}$. Then in each two-partition we add $A''$ to any one (but not both) of the two sets in the partition. It is easy to see that the resulting set is a set of two-partitions of $A$ (not necessarily the set of \emph{all} two-partitions of $A$). Furthermore, we still ensure that Lemma~\ref{lem:twopart:ind} holds. For, if $C(q_{\mathbf{b}} \wedge q_{A''})$ is empty, then Lemma~\ref{lem:twopart:ind} automatically holds due to construction of $P_{A'}$. On the other hand, if $C(q_{\mathbf{b}} \wedge q_{A''})$ is not empty, then we are adding a set of new contributors which are not in $A'$. Adding these contributors means that Bounded Noisy Counts adds fresh noise to the corresponding query (on the particular set in the given two-partition in which $A''$ is added). Thus, Lemma~\ref{lem:twopart:ind} follows due to Lemma~\ref{lem:twopart} in this case. We shall call $A'$ a subset of $A$ with \emph{non-zero answers}.

\descr{Removing the Noise on a Target Attribute Value.} Let us now assume that we are interested to know the value $q_\mathbf{b} \wedge q_{a}$ for some target attribute value $a$ in $A$. We take a subset $A'$ of $A$ with non-zero answers such that $a \notin A'$. We first 
run the Noise Remover on the set of two-partitions $P_{A'}$ of $A'$, obtaining count $n'$. We then construct $P_{A' \cup \{a\} }$, and run the Noise Remover algorithm again to obtain the count as $n''$. The 
answer to the above query is then $n'' - n'$. If  $\mathcal{M}(q_\mathbf{b} \wedge q_{a}) = 0$ then we can use the trick mentioned above to construct two-partitions of $A' \cup \{a \}$. Let $p_{\text{nr}}$ denote the probability of success of Noise Remover. Then, through a simple application of the union bound, the success probability is given by $1 - 2(1 - p_{\text{nr}})$. This requires around $2(2k) = 4k$ calls to the Attribute Analyser, and hence $t = 4k$ calls to Bounded Noisy Counts.
%  Then we construct the set of two-partitions $P_{A'}$ of the set $A' = A - \{a_i\}$ and run Noise Remover algorithm again obtaining $n'$. The answer to the above query is then $n - n'$. 

\descr{Removing the Noise on the Attribute Histogram.} Continuing on with the previous example we can in fact find answers to $q_\mathbf{b} \wedge q_{a_i}$ corresponding to all $m$ attributes of $A$. We first construct a subset $A'$ of $A$ with non-zero answers (with $m'$ number of attributes), and use Noise Remover once to find $q_\mathbf{b} \wedge q_{A'}$. For any attribute $a \in A - A'$ we follow the methodology defined above to retrieve the answer. Thus, a further $m - m'$ calls to the Noise Remover. On the other hand, for any target attribute value $a' \in A'$, we run the Noise Remover on two-partitions of $A' - \{a' \}$ (instead of $A' \cup \{a'\} = A'$). This means a further $m'$ calls to the Noise Remover. By the union bound, the overall probability of success is given by $1 - (m - m' + 1 + m')(1 - p_{\text{nr}}) = 1 - (m + 1) (1 - p_{\text{nr}})$. This requires $2k(m + 1)$ calls to the Attribute Analyser, and hence $2k(m+1)$ queries to Bounded Noisy Counts. Figure~\ref{fig:union-nr-succ} shows the success probability in finding all queries corresponding to all attribute values in some target attribute $A$. Here we have used $k = 800$ ($t = 1600$), and thus $|A'|$ has to be $\ge 12$. Recall that $k$ cannot be greater than $2^{m' - 1} - 1$. While $m' = 11$ would suffice to find all attribute values in $A - A'$, we require $m' = 12$ so that we can find the attribute values within $A'$ as well. Note that this result is obtained through the union bound, and the actual success probability is likely to be much better. 
\begin{figure}[ht!]
\centering
\ifpets
\includegraphics[width=\columnwidth]{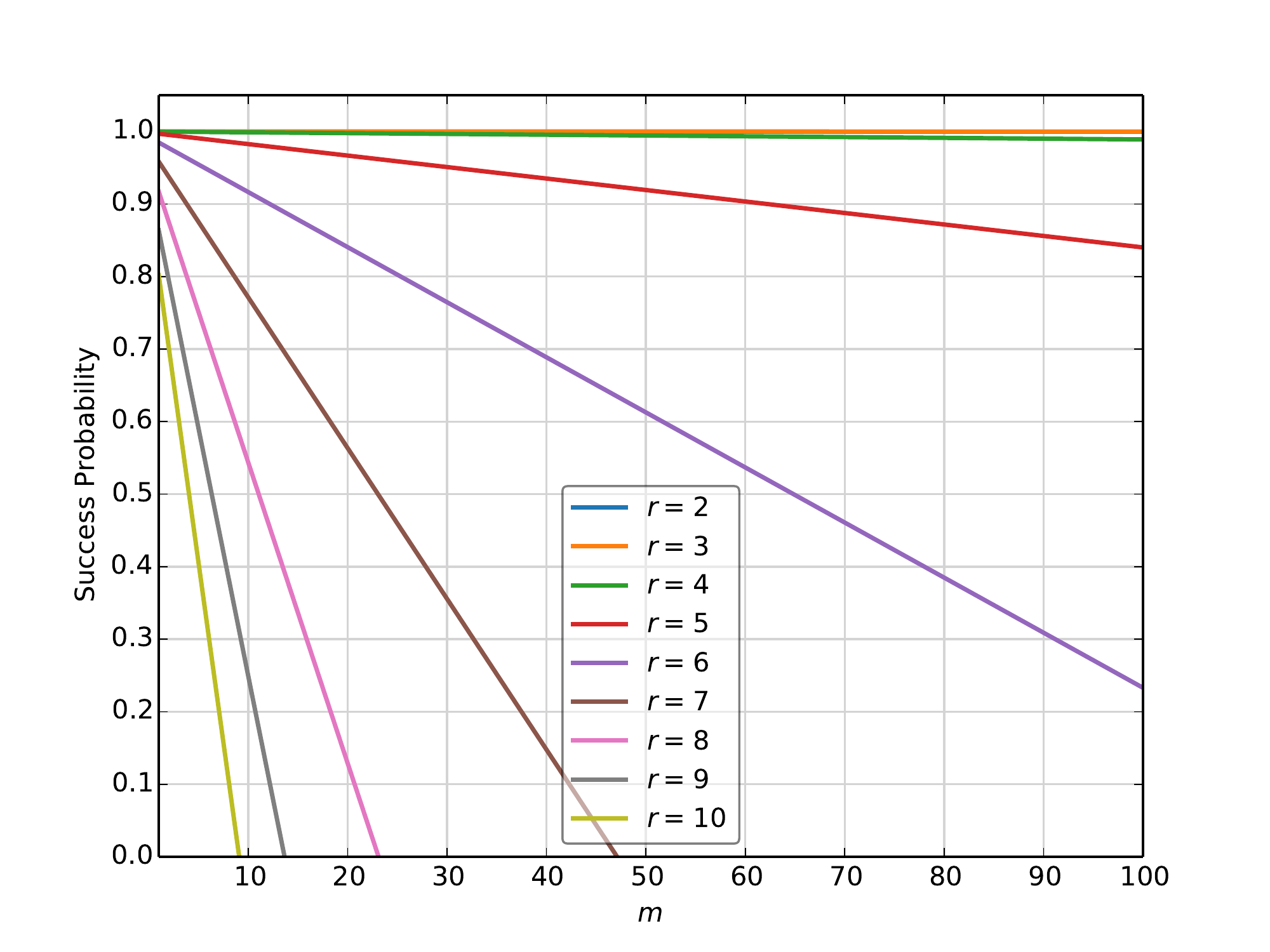}
\else
\includegraphics[scale=0.6]{figures/union-nr-succ.pdf}
\fi
\caption{Probability of successfully retrieving all actual counts of the queries $(q_{\mathbf{b}} \wedge q_{a_i})(D)$ of an attribute $A$ with $m$ different attribute values. Here $k = 800$ and hence $t = 1600$.}
\label{fig:union-nr-succ}
\end{figure}

\section{Experimental Evaluation on the TBE Algorithm}
\label{sec:eval}

\begin{table*}[!ht]
\centering
\caption{Results of running Attack 2 to retrieve a target column of the synthetic dataset. Cells labeled * means that some instances had negative counts which were floored to 0. There were four such instances for $k = 50$, and one instance for $k = 127$. In all cases the returned count was $-1$.}
\label{tab:results}
\begin{tabular}{c|c|c|c|c|c}
\multirow{2}{*}{Count ($c$)} & True & \multicolumn{4}{c}{Total Correctly Retrieved}\\
\cline{3-6}
& Instances & $k = 50$ & $k = 127$ & $k = 200$ & $k = 255$ \\
\hline\hline
%$c < 0$ & 0 & 4 & 1 & 0 & 0 \\
$c = 0$ & $56$ & $55^*$ & $56^*$ & $56$ & $56$ \\
%$c \le 0$ & 56 & 55 & 56 & 56 & 56 \\
$0<c\le 4$ (suppressed) & $6$ & $6$ & $6$ & $6$ & $6$\\
$4 < c \le 100$ & $8$ & $8$ & $7$ & $8$ & $8$\\
$100 < c \le 44865$ & $37$ & $34$ & $37$ & $37$ & $37$\\
\hline
Total & $107$ & $104$ & $106$ & $107$ & $107$\\
Success Percentage & - & $97.2\%$ & $99.1\%$ & $100\%$ & $100\%$\\
\end{tabular}
\end{table*}

\subsection{Synthetic Dataset}
We ran Attack 2 on a synthetic dataset accessed via an API built on top of the TBE algorithm. The API mimics the functionality of the TableBuilder tool from ABS. Our privacy algorithms represent an abstract mathematical model of the TBE algorithm. As such there is one significant simplification used in our mathematical model that needs specific mention. Since the TBE algorithm is meant to answer queries \emph{on-the-fly}, it maintains a pre-computed table of noise (instead of freshly generated noise for queries with a new set of contributors). Since the number of queries can be much larger than the dimension of the table, a mechanism is introduced that deterministically accesses the relevant noise entry in the table. The entries in the table are themselves derived from an \emph{admissible} distribution that maximises entropy subject to utility constraints~\cite{marley-tb}. In the next section (Section~\ref{sec:entropy}), we show that the discrete uniform distribution over $\mathbb{Z}_{\pm r}$ is one such distribution, and any other choice of admissible distribution is still susceptible to our attack. Thus, the cells in the table of noise can be safely assumed to be uniform random entries from $\mathbb{Z}_{\pm r}$. To ensure that same contributors receive the same noise, the contributors (users) in the dataset are assigned unique keys. When combining different contributors, the keys are XORed and then given as input to a pseudo-random number generator which in turn maps it to a perturbed value in the table~\cite{tb-differencing, marley-tb}. We see that with a big enough perturbation table, our model is a good approximation. As we shall see, the results of our attack confirms this. The aforementioned API uses the perturbation parameter $r = 2$. Furthermore, it returns the output ``suppressed'' for counts of $\le 4$. However, for the sake of our attack, we assume that the returned count is 0 (which is a more difficult problem). 

We fix a target attribute in the synthetic dataset. The attribute has $107$ different attribute values. Details of the synthetic dataset including how it was generated and the resulting counts are given in Appendix~\ref{app:syndata}. We then run the Noise Remover on each attribute value with different values of $k$ (number 
of two-partitions). The results are shown in Table~\ref{tab:results}. As an example, with $k = 200$, the probability that all $107$ attribute values are returned correctly is at least $0.959$ (according to the analysis above, using a union bound). In practice 
with $k = 200$ and $k = 255$ all attribute values are returned exactly without any error. We see that even with $k = 50$ (which amounts to $t = 2k = 100$ queries to Bounded Noisy Counts per attribute value), we have only a 7.48\% of 
attribute values with an incorrect answer. 

A few observations are in order: (a) First, even for the cases where the actual count returned is incorrect, the level of noise is reduced (i.e., it is $\pm 1$ 
instead of $\pm 2$). (b) Secondly, in some cases the noise returned is $-1$. By the properties of the algorithm, we can fix this to $0$. This results in even less percentage of erroneous attribute values: 3.74\% 
error for $k = 50$ and $0.9\%$ (only one incorrect guess) for $k = 127$. (c) A final observation is that the probabilities reported in Figure~\ref{fig:union-nr-succ} are for the entire column. If the target is only one specific set of attribute values (corresponding to a target individual), then the probabilities are higher, i.e., $0.9996$ for the case of $k = 200$.

In Appendix~\ref{app:constraints}, we experimentally analyse the success probability of the Noise Remover by varying the number of attribute values whose corresponding queries have non-zero outputs, and discuss some workarounds when this number is low. 

\subsection{Adult Dataset}
We also ran the attack on a real-world dataset. For this, we used the Adult dataset~\cite{adult}, which is an extract from the 1994 US Census information.\footnote{More specifically, we use the \texttt{adult.data} file from \url{https://archive.ics.uci.edu/ml/datasets/Adult}.} The dataset consists of 32,561 rows, each containing an individual's data. We extract the age column for the attack. This column contains all ages in the integer range $[17..88]$ and the age 90. We augment this by including ages 10 to 16 inclusive, age 89, and ages 91 to 120 inclusive, each obviously having a count of 0. Thus, there are a total of 111 values for the age attribute to be queried via the TBE API. The breakdown of true counts is as follows: 

\begin{center}
\begin{tabular}{c|c}
{Count ($c$)} & True Instances\\
\hline\hline
%$c < 0$ & 0 & 4 & 1 & 0 & 0 \\
$c = 0$ & $38$ \\
%$c \le 0$ & 56 & 55 & 56 & 56 & 56 \\
$0<c\le 4$ (suppressed) & $4$ \\
$4 < c \le 100$ & $16$\\
$100 < c \le 898$ & $53$\\
\hline
Total & $111$ \\
\end{tabular}
\end{center}

We used the set of 10 attribute values $A' = \{17, 18, \ldots, 27\}$ as the \emph{base set}. All of the attribute in this set have a true count of $\ge 395$, and hence their noisy counts would not be suppressed by Bounded Noisy Counts for any reasonable perturbation parameter value. We then used 1,000 two-partitions out of the possible $1,023$ two-partitions from $A'$ to find $q_{A'}$ through the Noise Remover. We denote this query answer by $n'$. Then for each age $a \notin A'$ we create the set $A'' = A' \cup \{a\}$, and use the Noise Remover with $k$ two-partitions to find the answer to $q_{A''}$ as $n''$. The answer to $q_a$ is then obtained as $n'' - n'$. For an age $a \in A'$, we create the set $A'' = A' - \{a\}$, and again use $k$ two-partitions to find the answer to $q_{A''}$ via Noise Remover as $n''$. The answer to $q_a$ in this case is $n' - n''$. 

Notice that only for the base set do we use a total of 1,000 two-partitions, since this will be done only once. The number $k$ of two-partitions used to compute the answers to $q_{A''}$ for each $A''$ is from the set $\{50, 100, 200, 250\}$. For each $k$, we run the experiment a total of 100 times. The average success rate is then reported. For this experiment we use three different perturbation parameters: $r= \{2, 3, 5\}$. The results are summarised in Table~\ref{tab:adult-results}. We see that even with $r = 5$, the attack successfully recovers the true count of more than $93\%$ of attribute values with $k = 250$ two-partitions used per attribute value, i.e., $t = 2k = 500$ queries per attribute value to Bounded Noisy Counts.

\begin{table*}[!ht]
\centering
\caption{Results of running Attack 2 to retrieve the age column of the Adult dataset against different perturbation parameter values. Negative counts were ceiled to 0. For each $k \in \{50, 100, 200, 250\}$ (number of two-partitions) the average of 100 runs is reported.}
\label{tab:adult-results}
\begin{tabular}{c|c|c|c|c|c}
Perturbation & \multirow{2}{*}{Total} & \multicolumn{4}{c}{Total Correctly Retrieved}\\
\cline{3-6}
$r$ & & $k = 50$ & $k = 100$ & $k = 200$ & $k = 250$\\
\hline\hline
%$c < 0$ & 0 & 4 & 1 & 0 & 0 \\

$\pm 2$ & $111$ & $103.2$ & $110.1$ & $111.0$ & $111.0$ \\
& - & $93.0\%$ & $99.2\%$ & $100.0\%$ & $100.0\%$ \\
\hline
$\pm 3$ & $111$ & $89.8$ & $103.9$ & $110.0$ & $110.8$ \\
& - & $80.9\%$ & $93.6\%$ & $99.1\%$ & $99.8\%$ \\
\hline
$\pm 5$ & $111$ & $70.2$ & $88.0$ & $98.2$ & $103.6$ \\
& - & $63.3\%$ & $79.3\%$ & $88.4\%$ & $93.4\%$ \\
\end{tabular}
\end{table*}

\section{Some Inherent Limitations}
\label{sec:entropy}
The noise distribution for the TBE algorithm is required to maximise disclosure control subject to utility constraints~\cite{marley-tb}. We call such a distribution, an \emph{admissible distribution}. More precisely, given a finite set of integer perturbation values $\Pi$, an admissible distribution $\mathcal{E}$ can be obtained by maximising the entropy $
- \sum_{e \in \Pi} p(e) \log_2 p(e)$ subject to the constraints
\begin{enumerate}
    \item It should be a probability distribution: $\sum_{e \in \Pi} p(e) = 1$.
    \item It should be unbiased, i.e., $\mathbb{E}(E) = 0$, where $E$ is the random variable distributed as $\mathcal{E}$.
    \item It should have bounded variance, i.e., $\text{Var}(E) \le v$, for some threshold $v$.
\end{enumerate}
These properties are stated in~\cite{marley-tb}, except that we have excluded the condition that the noise values should not be less than $0$ or a positive value, as we expect this to be handled by the suppression parameter. The method of Lagrange multipliers~\cite[p. 707]{bishop-pattern} can be used to solve the above problem to find the distribution $\mathcal{E}$ given the set $\Pi$~\cite{marley-tb}.

We call $\mathcal{E}$ non-trivial if it is any distribution other than $p(0) = 1$. An immediate consequence of constraint 2 is that any non-trivial distribution requires the set $\Omega$ to have at least one negative and one positive integer. Also, if we let $v \ge r(r+1)/3$ in constraint 3, then given the set $\Pi = \mathbb{Z}_{\pm r}$, we get the discrete uniform distribution $\mathbb{U}_{\pm r}$ as the unique solution to the above optimisation problem (recall that $r(r+1)/3$ is the variance of the discrete uniform distribution $\mathbb{U}_{\pm r}$). This is the distribution that we have used in this paper. Other admissible distributions include, the zero-mean truncated normal distribution, the zero-mean truncated Laplace distirbution, and in general any truncated zero-mean symmetric distribution. Samples from these distributions can be rounded to nearest integers to fall in the set $\Pi$.

A natural question to ask is whether any other admissible distribution makes the attack significantly harder. The answer to this question is negative. The clue lies in Eq.~\ref{eq:cheby}. First, if we let $v < r(r+1)/3$, then the resulting distribution will have variance less than $\mathbb{U}_{\pm r}$, and through Eq.~\ref{eq:cheby}, the variance of the average of $t = 2k$ such variables will be less than its uniform counterpart, and as a result the attack requires fewer queries (given by $t$) to remove noise. We can in fact derive the minimum condition required of the perturbation to withstand the attack. First, from Section~\ref{subsub:succ-prob}, we see that $    \text{Var}(\overline{Y}) = \frac{1}{k^2} \sum_{i = 1}^k \text{Var}(Y_i) = \frac{1}{k}\text{Var}(Y)$, where $Y$ is the random variable denoting the sum of two random variables distributed as $\mathcal{E}$, and $k$ is the number of two-partitions. The two being i.i.d., we get $\text{Var}(\overline{Y}) = \frac{2}{k}\text{Var}(E)$, where $E$ is distributed as $\mathcal{E}$. Since $\mathbb{E}(E) = 0$, we have that 
\[
\text{Var}(E) = \sum_{e \in \Pi} e^2 p(e) \le \sum_{e \in \Pi} c^2 p(e) = c^2,
\]
where $c = \max\{ |e| : e \in \Pi \}$, i.e., the maximum absolute perturbation. Thus, we get $\text{Var}(\overline{Y}) \le  \frac{2c^2}{k} = \frac{4c^2}{t}$, where $t = 2k$ is the number of queries to the Bounded Noisy Counts algorithm. Now, Eq.~\ref{eq:cheby} shows that for the attack to be unsuccessful, we should have $\text{Var}(\overline{Y}) = \Omega (1)$. Together with the previous result, this implies that, we require
\begin{equation}
    c = \Omega(\sqrt{t}).
\end{equation}
Thus, the amount of perturbation needs to be of order $\sqrt{t}$ to thwart the attack, where $t$ equals the number of queries. This result is consistent with the results from linear reconstruction attacks~\cite{dinur-nissim}. Incidentally, this is also the level of noise required by any differentially private algorithm to answer $t$ queries (without coordinated answers to queries)~\cite{power-of-state, vadhan-tutorial}. 
For instance, if the privacy budget $\epsilon$ is a small constant, then adding zero-mean Gaussian noise of standard deviation $\sqrt{t}$ to each of the $t$ queries satisfies \emph{concentrated differential privacy}~\cite{conc-dp}. Obviously, the noise is of scale $O(\sqrt{t})$. Likewise, to achieve the notion of $(\epsilon, \delta)$-differential privacy, with a constant $\epsilon$ and negligible $\delta$, one can answer $O(t)$ arbitrary counting queries with noise of scale $O(\sqrt{t})$ via the Laplace mechanism using the advanced composition theorem~\cite{steinke-pure-approx},\cite[Theorem 7.2.7]{vadhan-tutorial}. A consequence of the above result is that any perturbation algorithm with bounded (constant) noise is eventually expected to succumb to our noise removing attack.  
\section{Mitigation Measures}
We briefly discuss possible mitigation measures against the two attacks separately. 
\subsection{Mitigation Measures against Perturbation Finder}
Recall that the attack algorithm on finding the perturbation value relies on identifying an attribute with at least two attribute values. Assume this to be the attribute gender, with attribute values male and female. The attack involves submitting queries on the number of males, the number of females, \emph{and} the combined number of males and females (total query). 

\descr{Query Auditing.} A first defence mechanism is to audit queries to check if an analyst is attempting to find the perturbation parameter. This measure needs to identify all possible query combinations that can be used to narrow down the possible range (of the perturbation parameter). The specific construction of queries (outlined above) in our attack is one possible way. However, there may exist other combination of queries which could be used to find the perturbation parameter. This requires an exhaustive analysis. Furthermore, it is difficult to detect if there is malicious intent behind a given series of queries, as they can be contextually benign, e.g., an analyst might very well be checking the gender distribution across different occupations in a geographic area. In general, query auditing is a difficult problem~\cite{dp-book}.  

\descr{Query Throttling.} Another alternative is to throttle the number of queries. This can be done by introducing a ``cap'' on the number of queries allowed to an analyst. However, in light of our results, this would be too small a number, e.g., not allowing more than 200 queries if $r = 5$ is used as the perturbation parameter. 
%Query throttling could also be accomplished by applying a rate limit to slow down submission of successive queries. This has an obvious adverse impact on usability. 

\descr{Eliminating the Total Query.} Recall that the attack algorithm works by examining the difference between the noisy count of the total query versus the sum of noisy counts of the sub-queries. Thus one way to mitigate the attack is to \emph{not} add ``fresh'' noise to the total query (and instead report the sum of the noisy counts from the sub-queries). Unfortunately, this significantly impacts utility. For instance, if an analyst is interested in the number of people living in a certain geographic area (say the suburb Redfern), then the only way to obtain this answer would be to add the answers obtained from the number of males and the number of females living in the area. The problem is further exacerbated by the fact that there might be multiple attributes with two attribute values under the same geographic area. And thus the attack can be (slightly) modified to instead equate the sums obtained from multiple pairs of attribute values. 

\descr{Disclosing the Perturbation Parameter.} In light of the shortcomings of the above mentioned defence measures, an inevitable choice is to make the perturbation parameter public. Apart from having a negligible impact on (individual) privacy, this is beneficial from a utility point of view as well. The analyst now knows the degree to which an answer is possibly perturbed, and can factor this amount into his/her calculations.

\subsection{Mitigation Measures against Noise Remover}
Recall that the attack on removing noise relies on creating two-partitions of a target attribute, and the fact that fresh noise is added to the answers to the total queries from the two-partitions.  
  
\descr{Query Auditing.} Automated checks could be applied to see if a significant number of queries correspond to different two-partitions of the same sub-population. Several issues make this a less than ideal solution. First, malicious queries might not be successively submitted; a clever attacker might inject these queries in between several innocuous queries. In general, query auditing can be computationally infeasible; indeed, it is NP-Hard to detect maliciously crafted queries via query auditing~\cite{query-audit}. In fact, the need to dispense with query auditing is one of the motivations behind the rigorous definition of differential privacy~\cite{dp-book}. Secondly, we could modify the attack to include three-partitions instead of two-partitions (with a corresponding increase in the number of queries required to remove noise). Lastly, while we have demonstrated one way in which multiple answers can be combined together and averaged to remove noise, we have not checked and confirmed whether there exist other query combinations which could do the same.  

\descr{Query Throttling.} Placing a cap on the allowed number of queries is another option, with the obvious drawback that it limits the analyst to a much smaller number of queries. The attacks described in this paper as well as prior work on reconstruction attacks~\cite{dinur-nissim} suggest that this is unavoidable if bounded noise mechanisms are deemed indispensable. A technical difficulty is proposing a quantitative bound on the number of queries allowed. For instance, our results show that even 100 queries remove the noise for most attributes with a perturbation parameter of $r = 2$. 

\descr{Eliminating the Total Query.} The noise removing algorithm relies on the fact that answer to the total query adds fresh noise, which can be compared against the sum of the noisy counts of queries corresponding to the two-partitions. If the total query does not add fresh noise, the sum would be noisier and as a result would require a larger number of queries to eliminate noise. However, as discussed before, this is not desirable from a utility point of view. For instance, if the analyst wishes to know the number of people in a sub-population with age greater than 50, then the only way to obtain this would be to add the result obtained from each age grouping (and thus obtain a noisier answer). We do note that there are techniques to construct differentially private histograms which maintain \emph{consistency}~\cite{barak-fourier, hist-consistency}, with often the output being much more accurate~\cite{hist-consistency}. Here consistency, for instance, means that the noisy answer to the total query and the total after summing the noisy counts of per-attribute value queries are the same. These approaches rely on optimization via post-processing. For instance, one way to maintain consistency is to add noise to queries on attribute value tuples from all the attributes in the dataset, and answer queries on fewer attributes from these~\cite[\S 3.1]{barak-fourier}. Thus, there are better ways to answer the total query than the one outlined above.

\descr{Provably-Private Alternatives.} Our attack is another example of a series of attacks demonstrating that extremely high accuracy cannot be guaranteed for too many queries due to privacy concerns. For instance, prior results have shown that noise needs to be calibrated according to the number of queries to avoid database reconstruction attacks~\cite{exposed, dinur-nissim}. In Section~\ref{sec:entropy}, we showed that this is needed to prevent our attacks as well. Thus, a safe way of releasing noisy answers is to scale the noise as a function of the number of queries asked. Differential privacy~\cite{calib-noise, dp-book} is a privacy definition and framework that allows to do this. The parameter $\epsilon$ in differential privacy determines the noise added to query answers and can be tweaked to find a balance between privacy and utility. Furthermore, this parameter can be safely disclosed without effecting privacy. However, this suggests that answering too many queries will result in noise that badly affects utility. This is an inherent limitation of \emph{any} privacy-preserving mechanism. In particular, there is growing amount of evidence (including this work) that suggests that any meaningful guarantee of privacy cannot allow extremely accurate answers to an unlimited number of queries. 

\subsection{Lessons Learned} 
We summarise some of our recommendations:
\begin{itemize}
    \item Parameters used in the privacy algorithms, e.g., the perturbation parameter, should not be kept secret. It is easy to retrieve them if hidden, and, furthermore, disclosing them upfront is helpful for analysts.
    \item Ad hoc workarounds to mitigate averaging attacks, e.g., by adding same noise to same queries or same contributors, only marginally impede them. At best they inform us where fresh noise should not be wasted, e.g., if same query is repeated.
    \item Mitigating attacks with specific patches, e.g., eliminating the total query, may not prevent other ways of carrying out these attacks.
    \item Adding fresh noise to query answers is not a privacy problem as long as the noise scale is a function of the number of queries.
    \item Differential privacy provides a framework which allows to add noise to queries in a way that mitigates averaging or other attacks. Often the scale of noise required is optimum in order to avoid privacy catastrophes, e.g., reconstruction attacks.
\end{itemize}

\section{Discussion}
%\begin{itemize}
We reiterate that due to ethical reasons we did not demonstrate the attack on real census data, but rather showed the vulnerability of the perturbation method by applying it on a synthetic dataset. Thus, the actual TableBuilder tool is vulnerable to our attack and remains at risk from similar attacks. Notice that even though the TableBuilder tool is not equipped with an API the attack could still be performed in an automated way, e.g., one could use web-based scripts to query the tool.
 %   \item One may argue that our disclosure of the attack is potentially harmful as it instructs real-world attackers on how to attack the TableBuilder tool sitting behind Australian census data. There are a number of issues with this line of thinking. First, transparency in the design and implementation of security and privacy critical systems is of utmost importance. It allows for a thorough investigation of a system's security/privacy properties and identification of any vulnerabilities. Patching these vulnerabilities in advance mitigates the chances of inadvertent future data breaches. Secondly, it is difficult to be certain that any third party is not already aware of such attacks. Third, since the TableBuilder tool allows access to public data, it is important for the public to know the security/privacy measures in place behind usage of their data.

We communicated the vulnerability to ABS. They acknowledged that the attack relates to TableBuilder. In response, we were told that the ABS is bringing some upcoming changes to the TableBuilder tool. These include applying user-specific cap on the number of queries (users will have to re-apply once their query quota expires), only allowing highly aggregated data to TableBuilder Guest (which can be accessed without registration), and monitoring/auditing of TableBuilder usage logs. We believe that a more controlled access to TableBuilder is definitely a step in the right direction. For instance, only allowing access to trusted users.\footnote{One may wonder if the user is trusted, why, then, use a perturbation mechanism at all? One reason provided to us is that the user may wish to publicly release information obtained from TableBuilder, e.g., a journalist. However, in such cases the information that the user wishes to publish, and only this information, can always be ``sanitised'' before publication.} As mentioned above, however, query capping/throttling and auditing are mitigation measures that are difficult to implement and impose. It is not clear exactly how many queries are safe (if noise is not a function of the number of queries), and as far as auditing is concerned, it is extremely difficult to determine if a series of queries is launched to carry out an attack or not (our attack or others). 

As we have demonstrated, our attack crucially retrieves low counts as well. Most importantly, it retrieves counts of 1 (which are suppressed by the TBE algorithm). Extracting such ``uniques'' is linked to re-identifying individuals. Some may argue that finding a unique is not the same as identifying a real person in the population exhibiting those attribute values. However, 
    maintaining this flimsy distinction between the two cases provides little solace; once uniques are identified, a little background information is enough to link them to real persons in the population~\cite{sweeney}. 

Finally, we would like to draw attention to a similar data usage scenario relating the United States (US) Census Bureau who seek to publish some aggregated form of the 2020 Census of Population and Housing~\cite{garfinkel}. The Bureau has internally investigated the applicability of database reconstruction attacks~\cite{exposed, dinur-nissim, dwork-recon, power-recon} on the 2010 (aggregated) census data and has come to the conclusion that given the amount of information leaked per person, there is a ``theoretical possibility'' that the census data could be reconstructed. We first note that these reconstruction attacks are also applicable to noisy data (where noise is significantly less than the amount of statistics released). Secondly, our averaging attack can also be seen as a form of reconstruction attack, where the attacker can reconstruct target columns of the underlying dataset. Based on their findings, Garfinkel, Abowd and Martindale~\cite{garfinkel} conclude:
    \begin{displayquote}
    Faced with the threat of database reconstruction, statistical agencies have two choices: they can either publish dramatically less information or use some kind of noise injection. Agencies can use differential privacy to determine the minimum amount of noise necessary to add, and the most efficient way to add that noise, in order to achieve their privacy protection goals.
    \end{displayquote}
    These recommendations for the US census data are inline with our suggestions for the Australian census data.
%\end{itemize}  

% One could argue that releasing this info may put the data at risk. However, we do not think so. First, we have communicated the vulnerability to the ABS. They have acknowledged it. But they have also mentioned that they have sufficient safeguards in place to prevent this. We do not know what those safeguards are; it would be good to know what exact safeguards are in place. 

% Trusted users are given access. But what good is the tool if it does not provided privacy? Privacy theater. 

% By providing controlled access to trusted users 

% Apart from the aforementioned mitigation measures,  

% \hassan{The following needs to be added: (a) Giving controlled access to trusted users (b) fishing attacks and how they are different from our attack (c) An Ethics Considerations subsection to mention that : For ethical reasons, we didn't attack the actual Census Data, but rather showed the vulnerability of the perturbation method on synthetic data only. Of course, the actual Tablebuilder tool is vulnerable and is at risk of similar attacks.  (I m thinking this should be clear from the Abstract and extended later on in a subsection) (d) Add another note to mention that we communicated these results to ABS and that they acknowledged the vulnerability of their method. (e) We used an API to query the data, but we could have used a simple Web-based script to query the tool. (f) }

\section{Related Work}
\label{sec:rw}
%It is interesting to note that the TBE algorithm was designed to thwart averaging/differencing attacks through the ``same contributors, same noise'' principle. Our attack shows that averaging attacks can be carried out in more subtle ways. 
Our attacks are not the only attacks reported on the TBE algorithm. A differencing attack on the TBE algorithm has been documented before, through which some information about a target individual can be inferred~\cite{chipperfield, keefe}. The attack essentially relies on some background knowledge. For instance, suppose that we know that $n$ out of $n+1$ individuals in a particular group, identified by a vector of attributes $\mathbf{b} \in \mathbf{B}$, satisfy a particular attribute value $a \in A$, where $A \notin \mathbf{B}$. Further assume that the $(n+1)$th individual, the target, has the same background, i.e., takes on the values $\mathbf{b}$, and we would like to know if the individual also exhibits $a \in A$. By querying the TBE algorithm on $\mathbf{b}$, and then $\mathbf{b} + a$, we can tell if the individual does not exhibit $a$ if the two answers returned by the TBE algorithm are different. However, notice that this attack is fundamentally different than our attacks. One major difference being that our averaging attack does not rely on any background information about other individuals.

A somewhat similar attack that exploits the use of suppression is highlighted in~\cite{culnane-opal}. Although the attack is in relation to the application of a differentially private mechanism to release histograms of transport data~\cite{opal}, it can also be applied to the TBE algorithm. For instance, suppose that the query answer $q_{a_1}$ is larger than the suppression parameter, and the query answer $q_{a_2}$ is lower, for $a_1, a_2 \in A$, for some attribute $A$. The TBE algorithm will return a noisy count for $q_{a_1}$ and 0 for $q_{a_2}$. However, the query answer $q_{a_1} \wedge q_{a_2}$ is higher than the suppression parameter. If the answer returned by TBE is different than $q_{a_1}$, then the analyst learns that $q_{a_2}$ is not zero. Indeed, this is the observation used by us in Section~\ref{subsub:broaden}.

As noted before, there is a specific class of attacks known as \emph{linear reconstruction} attacks that seeks to reconstruct a whole (target) column of a sensitive dataset~\cite{exposed, dinur-nissim, dwork-recon, power-recon} based on linear programming. Algorithms that allow overly accurate answers to too many linear queries are susceptible to these attacks. More precisely, algorithms which return noisy answers with noise bounded within $o(\sqrt{n})$, where $n$ is the number of rows (individuals) in the dataset, succumb to these attacks. However, it is not clear how linear reconstruction attacks can be applied to the TBE algorithm due to the restricted query interface. Linear reconstruction attacks use \emph{random subset sum} queries which requires the ability to query random subsets of rows of the underlying dataset~\cite{cohen-linear}. The restricted interface in TBE does not allow such queries. %Furthermore, reconstruction attacks may require a much larger set of queries. 
%These attacks may or may not be applicable to the above mentioned class of algorithms (due to same amount of noise added to the same set of contributors and lower counts being fixed to zero) and may require a much larger set of queries. 
%Nevertheless, our attack can be considered as a type of reconstruction attack tailored to the specific class of algorithms described above.

%As is explained in great detail in Culnane, 2017 (https://arxiv.org/abs/1704.08547), query set restriction imposed on unperturbed counts leaks private information about the absence or presence of certain categories. For example, an attacker could easily construct a pair of queries where the query set of a first query, excluding a target, is known to be just below the threshold s, and then use the fact that a noisy answer will only be returned for a second query in case the target record fulfills the query condition and bumps the set size over the threshold. Such risks might not be obvious to the reader and should at least be mentioned as a side note.
Diffix~\cite{diffix, diffix-birch} is another disclosure control mechanism that is built on some principles similar to TBE. In particular, Diffix also uses the concept of \emph{sticky noise}, e.g., same query, same answer. However, the authors in~\cite{diffix} note that a naive application of sticky noise is susceptible to a \emph{split averaging attack}. The attack is similar to our attack: ask queries on attribute values and their negations (e.g., $a$ and NOT $a$), and then average over multiple attribute values. Due to this and other attacks, Diffix uses the idea of layered sticky noise, using a combination of static noise (depending on the query conditions, e.g., gender being female) and dynamic noise depending on both the query conditions and the set of contributors~\cite{diffix, diffix-birch}. We note that the split averaging attack is different from our attack, as the TBE interface does not accept negated queries. We instead need to rely on two-partitions and the use of the total query. Furthermore, we give a detailed analysis on success probabilities as a function of the number of attribute values with non-zero counts returned by the algorithm.

Interestingly, Diffix circumvents our proposed attack by including \emph{per query condition} sticky noise. For instance, assume the attacker wants to know the exact answer to $q_b \wedge q_A$, where $b$ is an attribute value of some attribute $B$, and $A$ is the target attribute. Let $\{A_1, A_2\}$ be a two-partition of $A$. Then the noise added to $q_b \wedge q_{A_1}$ and $q_b \wedge q_{A_2}$ has a noise component added due to the contributors of the two queries, plus noise added due to the conditions $q_b$, $q_{A_1}$, and $q_{A_2}$. The noise added to the last two changes per two-partition and hence can be averaged out, but $q_b$ remains ``sticky.'' Hence the result of the averaging attack will not average this noise term out. The obvious drawback is that more noise is added per query. More details of the multi-layer noise in Diffix is given in \cite[\S 5.2]{diffix}. 

Gadotti et al.~\cite{gadotti-signal} propose a different averaging attack on Diffix, which circumvents the layered sticky noise by first removing the static noise component and then uses averaging to distinguish between two probability distributions, one with the sensitive attribute set to 1 and the other where it is set to 0. The attack relies on knowing whether the target record is unique in the dataset, which is likely to be true for a significant fraction of records in the dataset~\cite{gadotti-signal}. A more involved attack, called cloning attack, uses ``dummy'' conditions in queries to obtain the same set of contributors~\cite{gadotti-signal}. However this attack relies on the richness of the query language. 

There are also reconstruction attacks reported on Diffix~\cite{cohen-linear}, which rely on the ability to select random ``enough'' rows from the underlying dataset by exploiting the rich query interface of Diffix. In the most recent version of Diffix, the attack has been seemingly mitigated by restricting queries that would isolate individuals in the dataset~\cite{gadotti-signal, aircloak}. As mentioned before, the TableBuilder interface is highly restrictive, and it is not clear how such reconstruction attacks could be carried out on the TBE algorithm.

\section{Conclusion}
We have shown an averaging attack that retrieves actual values exhibited by an attribute (or one or more of its attribute values) in a dataset which can only be accessed via a privacy-preserving algorithm that adds bounded uniform noise to the answers. In line with previous research on linear reconstruction attacks (see e.g.,~\cite{exposed}), we show that if the number of allowed queries are above a given mark, the algorithm fails to provide privacy. We have demonstrated the attack on a synthetic dataset accessed via the TBE algorithm used for the ABS TableBuilder. While the TBE algorithm might be patched to resist the particular attack mentioned in this paper, we would like to stress that this may not be the only attack possible. A better alternative is to scale noise according to the number of queries allowed to minimise information leakage from a theoretical point of view~\cite{dinur-nissim}. This will guarantee that privacy is maintained in practice. We also restate that we have only considered one subset of queries (counting queries), and the attack may be applicable to other types of queries, e.g., range queries on continuous data. 

\bibliographystyle{plain}
\bibliography{nonoise-ref}

\appendix

% \section{Results on the TBE Algorithm}
% \label{app:results}
% Table~\ref{tab:results} shows the result of running the Noise Remover algorithm on the TBE algorithm as mentioned in Section~\ref{sec:eval}. The values coloured in red are erroneous results from the algorithms, i.e., where the output does not equal the actual count.
%Table comes here

%\multirow{3}{*}{$U$} & \multicolumn{6}{c}{$A = \{a_1, a_2, a_3\}$}\\
%\cline{2-7}
%& \multicolumn{2}{c|}{$a_1 = \text{Fruits}$} &  \multicolumn{2}{c|}{$a_2 = \text{Nuts}$} &  \multicolumn{2}{c}{$a_3 = \text{Water}$} \\
%\cline{2-7}
%&Mon $(t = 1)$ & Tue $(t = 2)$ &Mon $(t = 1)$ & Tue $(t = 2)$ &Mon $(t = 1)$ & Tue $(t = 2)$\\

%\input{results-table.tex}

\section{Proofs}
\label{app:proofs}

\subsubsection*{Proof of Proposition~\ref{prop:bnc}}
\begin{proof}
Let $e$ be the noise added by $\mathcal{M}$. (a) First assume $n > s$. Then since $e \in [-r , r]$, $\alpha = n + e \ge n - r \ge n - s > 0$. Here we have used the fact that $s \ge r$. If $n \le s$, then $\alpha = 0$ (step 2 of the algorithm). (b) Again, first assume $n > s$. Then $n - r \le \alpha \le n + r \Rightarrow n - s \le \alpha \le n + r$. Now, if $n \le s$, then $\alpha = 0$. Trivially, $n - s \le 0 = \alpha$. (c) If $C(q)$ is empty, then $|C(q)| = 0 < s$, and hence $\mathcal{M}$ should always return a 0 in this case.
\end{proof}

\subsubsection*{Proof of Lemma~\ref{lem:e:ind}}
\begin{proof}
Since $n_1$ and $n_2$ are both greater than $r$, the noisy answers returned by Bounded Noisy Counts are non-zero. Furthermore, since $n_1, n_2 > r > 0$, we see that $\{C(q_b \wedge q_{a_1}), C(q_b \wedge q_{a_2})\}$ is a partition of $C(q_b)$. Hence, the two have necessarily different contributors: $C(q_b \wedge q_{a_1}) \neq C(q_b \wedge q_{a_2})$. Therefore, Bounded Noisy Counts adds independent noise to the corresponding queries. Furthermore, $C(q_b) \ne C(q_b \wedge q_{a_1})$ and $C(q_b) \ne C(q_b \wedge q_{a_2})$, since the cardinality of both are greater than $r$, and hence cannot be equal to the total. Therefore, there is independent noise added to $n$ as well.
\end{proof}

\subsubsection*{Proof of Lemma~\ref{lem:diff:cont}}
\begin{proof}
Assume the contrapositive for some $i$ and $j$. 
%Since $\mathcal{M}(q_{b_i}), \mathcal{M}(q_{b_i})> 0$, we have $C(q_{b_i}), C(q_{b_j}) \neq \emptyset$. 
Then since $C(q_{b_i}) = C(q_{b_j})$, the Bounded Noisy Counts algorithm should add the same noise to $q_{b_i}$ and $q_{b_j}$, and hence $\mathcal{M}(q_{b_i}) = \mathcal{M}(q_{b_j})$; a contradiction.
%This follows since the Bounded Noisy Count algorithm only 
\end{proof}

\subsubsection*{Proof of Lemma~\ref{lem:twenty}}
\begin{proof}
First let us consider the number of permutations whose sum is greater than $3(r - 1)$. Note that none of the $E_i$'s can be less than $r - 3$. To see this, note that if $r = 1$, then any of the $E_i$'s cannot be equal to $r - 3 = -2$. Let us assume that $r > 1$, then if any of the $E_i$'s is $\le r - 3$, then the maximum possible sum is $\le r - 3 + r + r = 3(r-1)$, which is our threshold. Thus, we enumerate all possible permutation of values of the $E_i$'s, such that $E_i \ge r - 2$ and $E_1 + E_2 + E_3 >  3(r - 1)$. This is shown below.

\subsubsection*{Proof of Proposition~\ref{prop:pertsucc}}
\begin{proof}
From Lemma~\ref{lem:twenty}, there are exactly 20 possible values of the tuple $(z_1, z_2, z_3)$, for which $z$ in step 4 of the algorithm has sum greater than $3(r-1)$ or less than $-3(r-1)$.\footnote{Note that even though the lemma applies to the sum $z_1 + z_2 + z_3$, it is easy to see that it also holds true for $z_1 + z_2 - z_3$.} Through Lemma~\ref{lem:e:ind} the variables $z_i$ are i.i.d. The probability that $z$ for the $i$th set of queries to the Attribute Analyzer is within the interval $[-3(r-1), 3(r-1)]$ is given by $1 - 20/(2r+1)^3$. The result follows for all $m$ attributes, since the $i$th $z$ in step 4 is independently distributed due to Lemma~\ref{lem:diff:cont}.   
\end{proof}

\begin{center}
\begin{tabular}{c|c|c} 
$E_1$ & $E_2$ & $E_3$ \\
 \hline
 $r$ & $r$ & $r$ \\ 
 $r$ & $r$ & $r-1$ \\
 $r$ & $r$ & $r-2$ \\ 
 $r$ & $r-1$ & $r$ \\ 
 $r$ & $r-1$ & $r-1$ \\ 
 $r$ & $r-2$ & $r$ \\
 $r-1$ & $r$ & $r$ \\
 $r-1$ & $r$ & $r-1$ \\
 $r-1$ & $r-1$ & $r$ \\
 $r-2$ & $r$ & $r$ \\
\end{tabular}
\end{center}

There are exactly 10 such values. By symmetry, the same holds for $E_1 + E_2 + E_3 < -3(r - 1)$.
\end{proof}

\subsubsection*{Proof of Lemma~\ref{lem:numtwoparts}}

\begin{proof}
There are $2^m$ elements in the power set of $A$. Out of these, two are $\emptyset$ and $A$ itself. Out of the remaining $2^m - 2$ elements (subsets of $A$), we can construct a two-partition by choosing any element as the first subset, say $A'$, and $A - A'$ as the other subset. Since $A - A'$ is also in the power set, we have counted each two-partition twice. Thus, dividing $2^m - 2$ by 2 gives us the result.
\end{proof}

\subsubsection*{Proof of Lemma~\ref{lem:twopart}}

\begin{proof}
Let $S = A' \Delta A''$.\footnote{For any two sets $A$ and $B$, $A \Delta B$ denotes their symmetric difference.} Then, since $A' \ne A''$, $S \ne \emptyset$. Also, since $\mathcal{M}(q_\mathbf{b} \wedge q_{a_i}) \ne 0$ for all $i \in [m]$, it follows from Proposition~\ref{prop:bnc}, that $C(q_\mathbf{b} \wedge q_{A'})$ and $C(q_\mathbf{b} \wedge q_{A''})$ are non-empty sets. Now, assume to the contrary that $C(q_\mathbf{b} \wedge q_{A'}) = C(q_\mathbf{b} \wedge q_{A''})$. Since any contributor (user) can have only one attribute value in $A$, it must follow that for this contributor to be in both sets of contributors, the attribute value should be in the intersection of both $A'$ and $A''$. Since the two sets of contributors are equal, this means that any attribute value $a \notin A' \cap A'' = S$ is not taken by any contributor. Therefore, $(q_\mathbf{b} \wedge q_{a})(D) = 0$, for all $a \in S$, which means that Bounded Noisy Counts outputs $\mathcal{M}(q_\mathbf{b} \wedge q_{a}) = 0$. But this contradicts the assumption. 
\end{proof}

\subsubsection*{Proof of Lemma~\ref{lem:twopart:ind}}

\begin{proof}
Fix an $i$. The the two error variables $E_1^{(i)}$ and $E_2^{(i)}$ are noise terms added to the two sets of the corresponding two-partitions. Since by assumption all counts from Bounded Noisy Counts are non-zero, the queries corresponding to the two sets have non-zero set of contributors (Proposition~\ref{prop:bnc}). Since the two queries are on disjoint partitions, they also have different contributors (in fact, mutually exclusive). Therefore, the Bounded Noisy Counts algorithm adds independent noise with distribution $\mathbb{U}_{\pm r}$. Now consider, $Z_1, \ldots, Z_k$. By Lemma~\ref{lem:twopart} the query corresponding to every set in  the set of two-partitions $P_A$ has different contributors. Hence $E_j^{(i)}$ are i.i.d. with distribution $\mathbb{U}_{\pm r}$, where $j \in \{0, 1\}$, $i \in [k]$. The result follows.
\end{proof}

\subsection{Lower Bound on the Success Probability of Noise Remover}
\label{appsub:lower-bound}
It is easy to see that $\mathbb{E} (E_j^{(i)}) = 0$. And by the variance of the discrete uniform distribution
\[
\text{Var} (E_j^{(i)}) = \frac{(r - (-r) + 1)^2 - 1}{12} = \frac{r(r+1)}{3}.
\]
By the linearity of expectation
\[
\mathbb{E}(Y_i) = 0,
\]
and by Lemma~\ref{lem:twopart:ind},
\[
\text{Var}(Y_i) = \frac{2r(r+1)}{3}.
\]
Again through linearity of expectation
\[
\mathbb{E}(\overline{Y}) = 0,
\]
and by Lemma~\ref{lem:twopart:ind},
\[
\text{Var}(\overline{Y}) = \frac{1}{k^2} \sum_{i = 1}^k \frac{2r(r+1)}{3} = \frac{2}{3}\frac{r(r+1)}{k}.
\]
Using Chebyshev's inequality, we see that
\begin{equation*}
\Pr \left( \left\lvert \overline{Y} - \mathbb{E}(\overline{Y}) \right\rvert \ge \epsilon \right) \le \frac{\text{Var}(\overline{Y})}{\epsilon^2}.
\end{equation*}
By setting $\epsilon = 0.5$, and putting in the values of $\mathbb{E}(\overline{Y})$ and $\text{Var}(\overline{Y})$, we get
\[
\Pr \left( \left\lvert \overline{Y} \right\rvert \ge 0.5 \right) \le \frac{8}{3}\frac{r(r+1)}{k}.
\]
Thus,
\begin{equation*}
%\label{eq:cheby-uniform}
\Pr \left( \left\lvert \overline{Y}  \right\rvert < 0.5 \right) \ge 1 - \frac{8r(r+1)}{3(2^{m-1} - 1)}.    
\end{equation*}

\subsection{Exact Success Probability of Noise Remover}
\label{appsub:exact}
Let $X_1 = E_1$, and define $X_i = X_{i - 1} + E_i$, for $i \in \{2, \ldots, 2k\}$. Then the probability mass function of $X_2$ is given by
\[
f_{X_2}(x) = \sum_{y = -\infty}^{+\infty} f_{X_1}(y) f_E(x - y) = \sum_{y = -\infty}^{+\infty} f_{E}(y) f_E(x - y),
\]
and for every $i$
\[
f_{X_i}(x) = \sum_{y = -\infty}^{+\infty} f_{X_{i-1}}(y) f_E(y - x). 
\]
Thus, we can iteratively find $f_{X_{2k}}$, the probability mass function of $X_{2k}$. Now,
\begin{align}
\Pr \left( \left\lvert \overline{Y}  \right\rvert < 0.5 \right) &= \Pr \left( \left\lvert \frac{X_{2k}}{k}  \right\rvert < 0.5 \right) \nonumber \\
																					&= \Pr \left( -\frac{1}{2} < \frac{X_{2k}}{k}  < \frac{1}{2} \right) \nonumber \\
																					&= \Pr \left( -\frac{k}{2} < {X_{2k}}< \frac{k}{2} \right) \nonumber \\
																					&= \sum_{x \in (-k/2, k/2)} f_{X_{2k}} (x) \nonumber.
\end{align}

\section{Synthetic Data Details}
\label{app:syndata}

\begin{table*}[!ht]
\centering
\caption{The distribution of counts in the target column in the synthetic dataset against each attribute value $a \in \{1, 2, \ldots, 107\}$. ``Counts'' represents true counts, and ``TBE'' represents noisy counts from the TBE algorithm with parameter $r = 2$.}
\label{tab:syndata}
\resizebox{\textwidth}{!}{%
\begin{tabular}{c|c|c|c|c|c|c|c|c|c|c|c|c|c|c|c|c|c|c|c|c}
$a$ & 1 & 2 & 3 & 4 & 5 & 6 & 7 & 8 & 9 & 10 & 11 & 12 & 13 & 14 & 15 & 16 & 17 & 18 & 19 & 20 \\
\hline
Counts & 1 & 3 & 6 & 12 & 33 & 53 & 114 & 199 & 372 & 677 & 1075 & 1837 & 2884 & 4388 & 6496 & 9136 & 12694 & 16893 & 21513 & 26566 \\
TBE & 0 & 0 & 6 & 12 & 34 & 52 & 116 & 197 & 373 & 678 & 1074 & 1838 & 2883 & 4389 & 6495 & 9137 & 12692 & 16892 & 21515 & 26564 \\ 
\hline
$a$ & 21 & 22 & 23 & 24 & 25 & 26 & 27 & 28 & 29 & 30 & 31 & 32 & 33 & 34 & 35 & 36 & 37 & 38 & 39 & 40 \\
\hline
Counts & 31854 & 36741 & 40268 & 43426 & 44865 & 44812 & 43054 & 40259 & 35698 & 31534 & 26103 & 20953 & 16539 & 12430 & 8977 & 6297 & 4283 & 2715 & 1775 & 1085 \\ 
TBE & 31853 & 36739 & 40267 & 43427 & 44867 & 44813 & 43053 & 40258 & 35696 & 31532 & 26105 & 20955 & 16537 & 12431 & 8975 & 6296 & 4284 & 2713 & 1774 & 1084 \\
\hline
$a$ & 41 & 42 & 43 & 44 & 45 & 46 & 47 & 48 & 49 & 50 & 51 & 52-107 &&&&&&&&\\
\hline
Counts & 614 & 377 & 196 & 93 & 53 & 24 & 14 & 3 & 4 & 1 & 1 & 0 &&&&&&&&\\
TBE & 615 & 378 & 195 & 95 & 52 & 25 & 15 & 0 & 0 & 0 & 0 & 0 &&&&&&&&\\
\end{tabular}%
}
\end{table*}

The synthetic dataset used in our experimental evaluation in Section~\ref{sec:eval} was generated by first fixing $n = 600,000$ rows. Next we generated $n$ normally distributed samples with mean 25 and standard deviation 5. The resulting samples were then binned into their nearest integer bins labelled $1$ to $51$. We assumed an attribute $A$ with $107$ different attribute values. We assigned the counts in the bins $1$ to $51$ to the attribute values $a = 1$ to $51$, respectively. Attribute values $52$ to $107$ were fixed at 0. For our attack evaluation, we only used this one column, where the counts are normally distributed. Table~\ref{tab:syndata} shows the counts in the target column used in our experimental evaluation together with the noisy counts retrieved via the TBE algorithm with perturbation parameter $r = 2$. 

\section{Constraints on the Dataset}
\label{app:constraints} The Noise Remover algorithm requires a minimum number of attribute values with non-zero counts returned by the Bounded Noisy Counts for a given probability of success. Abusing terminology, we call them \emph{attribute values with non-zero outputs}. For an attribute $A$ with $m$ attribute values with non-zero outputs, the probability of successfully determining the query answer $q_A$ can be determined via Eq.~\ref{eq:nr:suc:exact}. Recall that $m$ attribute values with non-zero outputs enables $k = 2^{m - 1} - 1$ two-partitions (cf. Section~\ref{subsub:broaden}). The higher the number of two-partitions available, the closer we get to the true answer as a function of the number of queries $t \le 2k$ to Bounded Noisy Counts. We are interested in evaluating the success probability when $m$ is small. To do this, we ran an experiment where we vary the number of attribute values of an attribute $A$ with non-zero outputs from 2 to 11. Then for each $m$, we run the Noise Remover with all the $k = 2^{m-1} - 1$ two-partitions, and obtain the result as the guess for $q_A$. We ran the experiment 10,000 times for each perturbation parameter $r$ in the range $2$ to $10$. The results are shown in Figure~\ref{fig:att-needed}. Note that $m = 2$ means that we have only 1 two-partition. In this case, the noise remover's success rate is simply $1/(2r + 1)$, the probability of guessing the true answer. For small parameters, $r \le 5$, we need 9 attribute values with non-zero outputs for the attack to be successful with more than 90 percent success rate. This reaches to $11$ for the perturbation parameter $r = 10$.

\begin{figure}[ht!]
\centering
\ifpets
\includegraphics[width=\columnwidth]{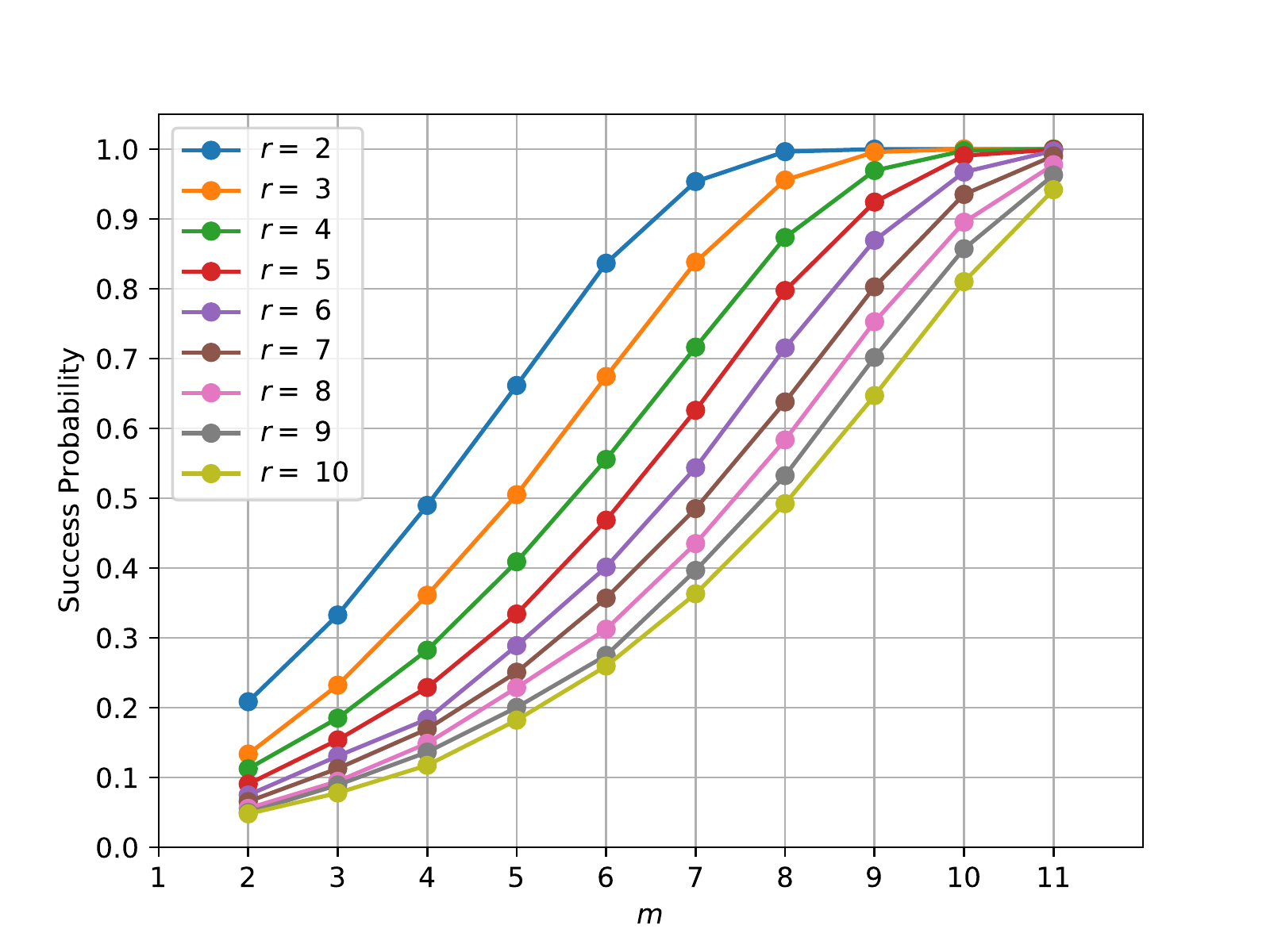}
\else
\includegraphics[scale=0.8]{figures/att-needed-3.pdf}
\fi
\caption{Probability of successfully retrieving the count of the query $q_{A}(D)$ of an attribute $A$ with $m$ different attribute values $a_1, \ldots, a_m$ such that the answers returned by Bounded Noisy Counts is non-zero for all $q_{a_i}$.}
\label{fig:att-needed}
\end{figure}

However, there are several workarounds when we have less than the ideal number of attribute values with non-zero outputs. We discuss two of them. First, we note that the attack can still be launched if we have only two attribute values with non-zero outputs. Let us assume the attribute $A = \{a_1, a_2, e_1, \ldots, e_m\}$. Assume that the queries on attribute values $a_1$ and $a_2$ have non-zero outputs. Further assume that the answers to the queries $q_{e_i}$, for $i \in [m]$, is zero from Bounded Noisy Counts, but they have \empty{non-empty} contributors. We denote this subset by $A'$. Then, first we can compute $2^{m-1} - 1$ two-partitions of $A'$. In each partition, we add $a_1$ to the first set and $a_2$ to the second set. Then it is easy to see that the resulting $2^{m-1} - 1$ partitions can be used to launch the noisy removing attack, as the corresponding query answers will not be suppressed and will have fresh noise added. Of course, for this to work we need to confirm if the true answers to the $q_{e_i}$'s are indeed non-zero. This can be done by using $a = a_1$ or $a = a_2$ and obtaining answers to $q_{a}$ and $q_a \wedge q_{e_i}$. If the two noisy answers are different, then necessarily $C(q_{e_i})$ is non-empty. The probability that the answer would be different is given by $1 - (2r+1)^{-2}$. 

Second, the attack can still be launched by adding/removing attribute values from other attributes. For instance, if we are interested in knowing $q_a$ for some attribute $a \in A$, and we have another attribute $B$, with $m$ attribute values with non-zero outputs, we can create two-partitions of $B$. Then for each two-partition $\{B', B''\}$, we can sum the noisy query answers $q_a \wedge q_{B'}$ and $q_a \wedge q_{B''}$. 

Thus, while a minimum number of attribute values with non-zero outputs in the target attribute makes the attack simpler, there are other ways to carry out the attack. This means that the attack cannot be rendered ineffective simply because the target attribute has a low number of attribute values with non-zero outputs. 

\end{document}